\newcommand{\sbs}{\text{$b^2$-Bucket Sort}\xspace}
\newcommand{\mbs}{\text{$b\log b$-Bucket Sort}\xspace}
\newcommand{\calA}{{\cal A}}
\newcommand{\calS}{{\cal S}}
\newcommand{\bB}{\mathbf{b}}
\newcommand{\bX}{X}
\newcommand{\bL}{L}
\newcommand{\bC}{\mathbf{c}}
\renewcommand{\epsilon}{\varepsilon} 
\renewcommand{\epsilon}{\varepsilon}
\newcommand{\eps}{\varepsilon}
\newtheorem{theorem}{Theorem}
\newtheorem{lemma}[theorem]{Lemma}
\newtheorem{corollary}[theorem]{Corollary}
\newtheorem{definition}[theorem]{Definition}
\newtheorem{observation}[theorem]{Observation}
\newtheorem{claim2}[theorem]{Claim}
\newcommand{\size}[1]{\ensuremath{\left|#1\right|}}
\newcommand{\set}[1]{\left\{ #1 \right\}}
\newcommand{\parentheses}[1]{\left(#1\right)}
\DeclarePairedDelimiter{\floor}{\lfloor}{\rfloor}
\DeclarePairedDelimiter{\ceil}{\lceil}{\rceil}
\newcommand{\expectation}[2]{\mathbb{E}_{#1}\left[ #2 \right]}
\renewcommand{\Pr}[1]{{\mathrm{Pr}}\left[ #1 \right]}
\newcommand{\Prr}[2]{\mathrm{Pr}_{#1}\left[ #2 \right]}
\begin{document}
\title{Upper Tail Analysis of Bucket Sort and Random Tries\thanks{This research was supported by a
grant from the United States-Israel Binational Science Foundation
(BSF), Jerusalem, Israel, and the United States National Science
Foundation (NSF)}}
\author{
 Ioana O. Bercea\thanks{
Tel Aviv University, Tel Aviv, Israel.
Email:~\texttt{ioana@cs.umd.edu, guy@eng.tau.ac.il}.}
\and
Guy Even\footnotemark[1]
}
\date{}
\maketitle

\begin{abstract}
Bucket Sort is known to run in expected linear time when the input keys
are distributed independently and uniformly at random in the interval $[0,1)$. The analysis holds even when
a quadratic time algorithm is used to sort the keys in each bucket. We show how to obtain linear time guarantees on the running time of Bucket Sort that hold with
\textit{very high probability}. Specifically, we investigate the asymptotic behavior of the exponent in the upper tail probability of the running time of Bucket Sort. We consider large additive deviations from the expectation, of the form
$cn$ for large enough (constant) $c$, where $n$ is the number of keys that are sorted.

Our analysis
shows a profound difference between variants of Bucket Sort that use a quadratic time
algorithm within each bucket and variants that use a $\Theta(b\log b)$ time algorithm for sorting
$b$ keys in a bucket. When a quadratic time algorithm is used to sort the keys in a bucket, the probability that Bucket Sort takes $cn$ more time than expected is exponential in $\Theta(\sqrt{n}\log n)$. When a $\Theta(b\log b)$ algorithm is used to sort the keys in a bucket, the exponent becomes $\Theta(n)$. We prove this latter theorem by showing an upper bound on the tail of a  random variable defined on tries, a result which we believe is of independent interest. This result also enables us to analyze the upper tail probability of a well-studied trie parameter, the external path length, and show that the probability that it deviates from its expected value by an additive factor of $cn$ is exponential in $\Theta(n)$.
\end{abstract}

\section{Introduction}
The Bucket Sort algorithm sorts $n$ keys in the interval $[0,1)$ as
follows:
\begin{enumerate*}[label={(\roman*)}]
\item Distribute the keys among $n$ buckets, where the $j$th
  bucket consists of all the keys in the interval $[j/n,(j+1)/n)$.
\item Sort the keys in each bucket.
\item Scan the buckets and output the keys in each bucket in their
  sorted order.
\end{enumerate*}
We consider two natural classes of Bucket Sort algorithms that differ
in how the keys inside each bucket are sorted. The first class of
BucketSort algorithms that we consider sorts the keys inside a bucket
using a quadratic time algorithm (such as Insertion Sort). We refer to
algorithms in this class as \sbs. The second class of algorithms sorts
the keys in a bucket using a $\Theta(b\log b)$ algorithm for sorting $b$
keys (such as Merge Sort).  We refer to this variant as \mbs.

When the $n$ keys are distributed
independently and uniformly at random, the expected running time of Bucket Sort is $\Theta(n)$, 
even when a
quadratic time algorithm is used to sort the keys in each
bucket\cite{cormen2009introduction,mitzenmacher2017probability,sanders2019sorting}. 
A natural question is whether such linear time guarantees hold with high probability.
For Quick Sort, analyses of this sort have a long and rich history~\cite{janson2015tails,
  fill2002quicksort, mcdiarmid1996large}.

In this paper, we focus on analyzing the running time of Bucket Sort
with respect to large deviations, e.g., running times that exceed the expectation by $10n$. 
In particular, we study the asymptotic behavior
 of the exponent in the upper tail of the running time.

\medskip\noindent
\textbf{Rate of the upper tail.} We analyze the upper tail probability of a random variable using the notion of rate, defined as
follows.\footnote{Throughout the paper, $\ln x$ denotes the natural logarithm of $x$ and $\log x$ denotes the logarithm of base $2$ of $x$.} 
\begin{definition}  Given a random variable $Y$ with expected value $\mu$, we define
 the \emph{rate} of the upper tail of $Y$ to be 
  the function defined on $t>0$ as follows:
\begin{center}
$R_{Y}(t) \triangleq -\ln\parentheses{\Pr{Y \geq \mu + t}}.$
\end{center}
\end{definition}

Note that we consider an additive deviation from the expectation, i.e., we bound
the probability that the random variable deviates from its expected
value by an additive term of $t$, for sufficiently large values of
$t$.\footnote{One should not confuse this analysis with concentration
  bounds that address small deviations from the expectation.} In
particular, we consider values of $t=cn$, where $n$ is the size of the
input and $c$ is a constant greater than some threshold. Finally, we
abbreviate and refer to $R_Y(t)$ as the \emph{rate} of $Y$.

We study the rates of the running times of deterministic Bucket Sort algorithms  in which the input
is sampled from a uniform probability distribution. We also consider parameters of tries induced by infinite
prefix-free binary strings chosen independently and uniformly at random.

\subsection{Our Contributions}

Our first two results derive the rates of the two classes of Bucket Sort algorithms 
 and show that they are different. Specifically, we prove the following: 

\begin{theorem}\label{thm:k2}
  There exists a constant $C>0$ such that, for all $c>C$, the rate
  $R_{b^2}(\cdot)$ of the \sbs algorithm on $n$ keys chosen
  independently and uniformly at random in $[0,1)$ satisfies
  $R_{b^2}(cn) = \Theta(\sqrt{n}\log n)$.
\end{theorem}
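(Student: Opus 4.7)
My plan is to first reduce the running time analysis to the random variable $S := \sum_{j=1}^{n} b_j^2$, where $b_j$ denotes the load of bucket $j$. The work done by \sbs is $\Theta(n + S)$ with $\mathbb{E}[n + S] = \Theta(n)$, so proving $R_{b^2}(cn) = \Theta(\sqrt{n}\log n)$ reduces, up to rescaling $c$ by a constant, to establishing both $\Pr{S \geq (2+c)n} \leq e^{-\Omega(\sqrt{n}\log n)}$ and $\Pr{S \geq (2+c)n} \geq e^{-O(\sqrt{n}\log n)}$ for $c$ larger than an appropriate threshold. I handle the two directions separately.

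For the rate upper bound (the lower bound on the tail probability) I will exhibit a single event that dominates: bucket $1$ alone receives at least $K\sqrt{n}$ keys, with $K = \sqrt{2+c}$. On this event, $S \geq K^2 n = (2+c)n$. Using the exact expression $\Pr{b_1 = k} = \binom{n}{k} n^{-k}(1-1/n)^{n-k}$ together with Stirling's approximation, a short computation gives $\Pr{b_1 \geq K\sqrt{n}} = e^{-\Theta(\sqrt{n}\log n)}$, which yields the claim.

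For the rate lower bound (the upper bound on the tail probability) I will combine a union bound with a Cram\'er--Chernoff argument applied to truncated occupancies. Fix a large $K$ and a small $\alpha < 1/(2K)$, and set $\lambda := \alpha\, \log n / \sqrt{n}$. A union bound yields $\Pr{\max_j b_j \geq K\sqrt{n}} \leq n \cdot \Pr{b_1 \geq K\sqrt{n}} = e^{-\Theta(\sqrt{n}\log n)}$. On the complementary event, each $b_j^2$ agrees with the truncated variable $\tilde Y_j := b_j^2 \wedge K^2 n$, which is a non-decreasing function of $b_j$. Since the multinomial vector $(b_1, \ldots, b_n)$ is negatively associated, the moment generating function factorizes as $\mathbb{E}[\exp(\lambda \sum_j \tilde Y_j)] \leq \prod_j \mathbb{E}[\exp(\lambda \tilde Y_j)]$. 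A Stirling-based estimate will then show $\mathbb{E}[\exp(\lambda \tilde Y_j)] \leq 1 + O(\lambda)$, so Markov's inequality gives $\Pr{\sum_j \tilde Y_j \geq (2+c)n} \leq e^{-c\lambda n + O(\lambda n)} = e^{-\Omega(\sqrt{n}\log n)}$ when $c$ is large enough.

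The main obstacle is calibrating the three parameters $\lambda$, $\alpha$, and $K$ simultaneously, because three demands must all be met at the same scale $\sqrt{n}\log n$: the union-bound tail $\Pr{\max_j b_j \geq K\sqrt{n}}$ must cost only $e^{-\Theta(\sqrt{n}\log n)}$; the series $\sum_k e^{\lambda k^2}/k!$ must stay bounded for $k$ up to $K\sqrt{n}$, which forces $\alpha K < 1/2$; and the Chernoff exponent $\Theta(\lambda n)$ must have order $\sqrt{n}\log n$, which forces $\lambda = \Theta(\log n/\sqrt{n})$. Any larger choice of $\lambda$ already causes the single-bucket MGF to diverge, and this is precisely the phenomenon that fixes the rate at $\Theta(\sqrt{n}\log n)$ and precludes anything better.
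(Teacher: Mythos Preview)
Your reduction to $S=\sum_j b_j^2$ and your rate \emph{upper} bound (the single-bucket event $b_1\ge K\sqrt{n}$) match the paper's argument exactly.

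For the rate \emph{lower} bound (the tail upper bound), your approach is genuinely different from the paper's and, as outlined, correct. The paper rewrites $\sum_j\binom{B_j+1}{2}=\sum_{i\ge 1} i\,|\calS_i|$ with $\calS_i=\{j:B_j\ge i\}$, then bounds each $|\calS_i|$ by Chernoff in four separate regimes delimited by thresholds $\tau_1,\tau_2,\tau_3$; the crucial regime $\tau_2\le i\le\tau_3$ requires the rarely-used $\delta\ln\delta$ form of Chernoff, and a final union bound over $i$ assembles the pieces. You instead run a single Cram\'er--Chernoff step: truncate $b_j^2$ at $K^2n$, dispose of the event $\max_j b_j\ge K\sqrt{n}$ by a union bound, and use negative association of the multinomial to factorize the MGF of $\sum_j(b_j^2\wedge K^2n)$. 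The per-bucket estimate $\mathbb{E}[e^{\lambda(b_j^2\wedge K^2n)}]\le 1+O(\lambda)$ with $\lambda=\alpha\log n/\sqrt{n}$ is the heart of your argument; it is correct provided $\alpha K<1/2$, since for $k\le K\sqrt{n}$ the exponent $\lambda k^2-\log(k!)\approx(\alpha\beta-\tfrac12)\beta\sqrt{n}\log n$ (with $k=\beta\sqrt{n}$) stays negative, while small $k$ contribute only $\lambda\,\mathbb{E}[b_j^2]=O(\lambda)$. Your method is more compact and makes the role of the threshold $\sqrt{n}$ transparent through the constraint $\alpha K<1/2$; the paper's layer-cake decomposition avoids the MGF calculation at the cost of juggling four regimes, but has the advantage that each step is a textbook Chernoff application with no truncation bookkeeping. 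Either route yields $R_{b^2}(cn)=\Theta(\sqrt{n}\log n)$ for all sufficiently large $c$.
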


Since the expected running time of $\sbs$ is $\Theta(n)$,
Theorem~\ref{thm:k2} states that the probability that \sbs 
on random keys takes more than $dn$ time is
$e^{-\Theta(\sqrt{n}\log n)}$ (for a suffciently large constant
$d$).\footnote{The threshold $C$ depends on: (1)~the constant that
  appears in the sorting algorithm used within each bucket, and
  (2)~the constant that appears in the expected running time of \sbs.}
Theorem~\ref{thm:k2} proves both a lower bound and an upper bound on
the asymptotic rate $R_{b^2}(cn)$.  In particular,
Theorem~\ref{thm:k2} rules out the possibility that the probability
that the running time of \sbs is greater than $100n$ is
bounded by $e^{-\Theta(n)}$.

We prove the lower bound on $R_{b^2}(cn)$ by
applying multiplicative Chernoff bounds in different regimes of large (superconstant, in fact)
deviations from the mean.  In such settings, the dependency of the exponent of the Chernoff
bound on the deviation from the mean can have a significant impact on the quality of the bounds we obtain. 
Indeed, we employ a rarely used form of the Chernoff bound that exhibits a $\delta \log \delta$ dependency in the exponent
when the deviation from the mean is $\delta$ (see Eq.~\ref{eq:ch4} in Appendix~\ref{app:chernoff} and Chapter $10.1.1$ in~\cite{doerr2018tools}). Although the proof of this bound is straightforward, the proof of
Theorem~\ref{thm:k2} crucially relies on this additional (superconstant) $\log \delta$ factor (see Claim~\ref{claim:sum3}).

\medskip\noindent
For \mbs on 
random keys, we show that the rate is linear in the size of the input:
\begin{theorem} \label{thm:klogk}There exists a constant $C>0$ such
  that, for all $c>C$, the rate $R_{b\log b}(\cdot)$ of the \mbs
  algorithm on $n$ keys chosen independently and uniformly at random
  in $[0,1)$ satisfies $R_{b\log b}(cn) = \Theta(n)$.
\end{theorem}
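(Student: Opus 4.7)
The theorem asserts $R_{b\log b}(cn) = \Theta(n)$, which decomposes into $R_{b\log b}(cn) \leq O(n)$ (the event $\{T \geq \mu + cn\}$ has probability at least $e^{-O(n)}$) and $R_{b\log b}(cn) \geq \Omega(n)$ (probability at most $e^{-\Omega(n)}$). Throughout, write $T$ for the \mbs running time, $\mu := E[T] = \Theta(n)$, and $T = O(n) + \sum_j b_j \log b_j$ with $b_j$ the size of bucket $j$ under a uniform allocation of $n$ balls into $n$ bins.

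For the \emph{upper bound on the rate} (lower bound on the probability), the plan is to exhibit an event of probability $e^{-O(n)}$ on which $T$ is large. Let $\calA$ be the event that bucket~$1$ receives exactly $k := \lceil Cn/\log n \rceil$ keys, for a constant $C > c + \mu/n + 1$. On $\calA$, bucket~$1$ alone contributes $k\log k = (C - o(1))\,n$ comparisons to $T$, exceeding $\mu + cn$. A direct Stirling computation gives $\Pr{\calA} = \binom{n}{k} n^{-k}(1 - 1/n)^{n-k}$; its dominant factor $n^{-k} = \exp(-k\ln n) = \exp(-Cn\ln 2 + o(n))$ is $e^{-O(n)}$, so $\Pr{T \geq \mu + cn} \geq e^{-O(n)}$.

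For the \emph{lower bound on the rate} (upper bound on the probability), the plan is to reduce $T$ to a random variable defined on tries and then bound its upper tail using the trie's recursive structure. For each bucket $j$ the $b_j$ keys are, after rescaling, i.i.d.\ uniform in $[0,1)$; let $L^{(j)}_{b_j}$ denote the external path length of the random trie on them, and set $\hat L_n := \sum_j L^{(j)}_{b_j}$. By Kraft's inequality, any binary tree on $b$ leaves has external path length at least $b\log b$, hence $T \leq \hat L_n + O(n)$. Since $E[L_k] = k\log k + O(k)$ is classical, also $E[\hat L_n] = \mu + O(n)$, so it suffices to prove $\Pr{\hat L_n \geq E[\hat L_n] + c'n} \leq e^{-\Omega(n)}$ for a suitable $c' = c'(c)$. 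Conditional on the bucket sizes $(b_1,\dots,b_n)$, the $L^{(j)}_{b_j}$'s are independent, so the bound reduces to a sub-Gaussian MGF estimate, uniform in $k$: $E[e^{\lambda(L_k - \ell_k)}] \leq e^{A\lambda^2 k}$ for $|\lambda| \leq \lambda_0$ and all $k$, where $A, \lambda_0 > 0$ are absolute constants and $\ell_k := E[L_k]$. Such an estimate would be proved by induction on $k$ from the trie recursion $L_k \stackrel{d}{=} k + L_M + L'_{k-M}$ with $M \sim \mathrm{Bin}(k, 1/2)$ and $L, L'$ independent copies, which yields
\[
E[e^{\lambda(L_k - \ell_k)}] = \mathbb{E}_M\!\left[e^{\lambda \phi_k(M)} \cdot E[e^{\lambda(L_M - \ell_M)} \mid M] \cdot E[e^{\lambda(L'_{k-M} - \ell_{k-M})} \mid M]\right],
\]
where the recursion discrepancy $\phi_k(M) := \ell_M + \ell_{k-M} + k - \ell_k$ satisfies $\mathbb{E}_M[\phi_k(M)] = 0$. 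Multiplying across buckets (using $\sum_j b_j = n$ so the $\lambda^2$ parameters collapse to $An$) and applying Chernoff completes the argument, with a standard Poissonization step used to decouple the multinomial bucket-size distribution.

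\emph{Main obstacle.} The technical heart is the sub-Gaussian MGF bound for $L_k$. Although $\phi_k(M)$ is mean-zero by construction, it can be of order $\Theta(k)$ for the extreme splits $M \approx 0$ or $M \approx k$ (which carry binomial mass only $\sim 2^{-k}$). Ensuring that these rare extremes do not spoil the inductive MGF bound, and that the constant $A$ closes uniformly in $k$, requires a multi-regime argument separating the near-balanced bulk of $\mathrm{Bin}(k, 1/2)$ from its lopsided tails; this is the trie tail estimate highlighted as being of independent interest.
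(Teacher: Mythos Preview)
Your upper bound on the rate (forcing one bucket to receive $\Theta(n/\log n)$ keys) is the paper's argument (Lemma~\ref{lem:ub}), and your reduction of the lower-rate direction to a trie path-length quantity is also the paper's: your $\hat L_n=\sum_j L^{(j)}_{b_j}$ is exactly the excess path length $p_{\log n}(L)$ once buckets are identified with depth-$\log n$ trie nodes (Observation~\ref{obs:same}, Corollary~\ref{cor:same}, Lemma~\ref{lem:red}).

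The divergence is in how the trie tail is bounded, and here your plan has a genuine gap. The inductive MGF scheme you outline does \emph{not} close. Plugging the hypothesis $E[e^{\lambda(L_j-\ell_j)}]\le e^{A\lambda^2 j}$ into the recursion gives
\[
E\bigl[e^{\lambda(L_k-\ell_k)}\bigr]\ \le\ e^{A\lambda^2 k}\cdot E_M\bigl[e^{\lambda\phi_k(M)}\bigr],
\]
since $A\lambda^2 M+A\lambda^2(k-M)=A\lambda^2 k$ is constant in $M$. You would therefore need $E_M[e^{\lambda\phi_k(M)}]\le 1$; but $\phi_k$ has mean zero and strictly positive variance, so by convexity $E_M[e^{\lambda\phi_k(M)}]>1$ for every $\lambda\ne 0$. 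This is not caused only by the lopsided splits you flag: even the bulk of $\mathrm{Bin}(k,1/2)$ already yields $E_M[e^{\lambda\phi_k(M)}]=1+\Theta(\lambda^2)$, and this surplus compounds through the recursion. Replacing the ansatz by an uncentered linear one, $E[e^{\lambda L_k}]\le e^{C\lambda k}$, fails for the same structural reason (the root cost $e^{\lambda k}$ pushes $C$ to $C+1$). There is also a circularity lurking in the per-bucket product: after multiplying the centered MGFs you must still control the random conditional mean $\sum_j \ell_{b_j}=g(\bB)+O(n)$, which is the very quantity you are trying to bound; Poissonization can dissolve the dependence among the $b_j$'s but does not supply the missing per-$k$ MGF control.

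The paper sidesteps all of this by \emph{not} decomposing bucket-by-bucket. It builds the trie string-by-string and bounds the increment $\Delta_i$ to the excess path length when the $i$th string is inserted: the depth of the new leaf below level $\log n$ is governed by a halving process (each additional bit halves the competing subtree with probability at least $1/2$), so each $\Delta_i$ is stochastically dominated, uniformly over the history, by a scaled geometric variable. A tail bound for the sum of $n$ independent geometrics then gives $\Pr{p_{\log n}(L)\ge cn}\le e^{-\Omega(n)}$ directly (Theorem~\ref{thm:excess}). This incremental argument is both more elementary than an MGF induction and free of the closing and centering issues above.
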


We prove the lower bound on $R_{b\log b}(cn)$ by analyzing a random variable arising in random tries. Specifically, we consider tries on infinite binary strings in which each bit is chosen independently and uniformly at random. The parameter we study is called the \emph{excess path length} and is defined formally in Section~\ref{sec:prelim}. We show that the time it takes to sort the buckets in \mbs can be upper bounded by the excess path length in a random trie (Lemma~\ref{lem:red}). We then bound the upper tail of the excess path length (Theorem~\ref{thm:excess}) and use it to lower bound $R_{b\log b}(cn)$. 

We also use the upper tail of the excess path length to derive the rate of a well-studied trie parameter, the sum of root to leaf paths in a minimal trie, called the \emph{nonvoid external path length}~\cite{Knuth,sedgewick2013introduction}. It is known that the expected value of 
the nonvoid external path length in a random trie is $n\log n + \Theta(n)$~\cite{Knuth,
  szpankowski2011average, sedgewick2013introduction}. We show the following:

\begin{theorem}\label{thm:external}
  There exists a constant $C>0$ such that, for all $c>C$, the rate
  $R_0(\cdot)$ of the nonvoid external path length of a minimal
  trie on $n$ infinite binary strings chosen independently and
  uniformly at random satisfies $R_0(cn) = \Theta(n)$.
\end{theorem}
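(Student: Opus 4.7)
\noindent\textbf{Proof proposal for Theorem~\ref{thm:external}.}

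The plan is to establish the two matching bounds $R_0(cn) \geq \Omega(n)$ (upper bound on the tail probability) and $R_0(cn) \leq O(n)$ (lower bound on the tail probability) via two separate arguments. The first bound will come by reducing to Theorem~\ref{thm:excess} through the definitional relationship between the nonvoid external path length $X$ and the excess path length $P$ from Section~\ref{sec:prelim}. The second bound will come from constructing a simple event of probability $e^{-\Theta(n)}$ on which $X$ deterministically exceeds $\mu + cn$.

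For the upper bound on the tail probability, I would express $X$ in the form $X = D_n + P$, where $D_n$ is a deterministic quantity depending only on $n$. Taking expectations gives $\mu = D_n + \mathbb{E}[P]$, so the event $\{X \geq \mu + cn\}$ coincides with $\{P \geq \mathbb{E}[P] + cn\}$. Applying Theorem~\ref{thm:excess} with additive deviation $cn$ bounds the latter probability by $e^{-\Omega(n)}$ whenever $c$ exceeds the threshold of that theorem, yielding $R_0(cn) \geq \Omega(n)$.

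For the lower bound on the tail probability, I would condition on the event $E$ that all $n$ input strings begin with the specific $k$-bit prefix $0^k$, for a constant $k$ to be chosen. Then $\Pr{E} = 2^{-kn} = e^{-\Theta(n)}$. Conditional on $E$, the trie consists of a length-$k$ chain from the root, followed at depth $k$ by a sub-trie on $n$ strings whose bits beyond position $k$ are still i.i.d. uniform. Hence, under $E$, we have $X = kn + X'$, where $X'$ is the nonvoid external path length of an auxiliary random trie on $n$ strings. Since the minimum total depth of any binary tree on $n$ leaves is at least $n\log n - n$ (attained by a balanced tree), $X' \geq n\log n - n$ holds deterministically. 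Therefore, under $E$, $X \geq kn + n\log n - n$. Choosing $k$ to be a sufficiently large constant (depending on $c$ and on the constant in $\mu - n\log n = \Theta(n)$), this lower bound exceeds $\mu + cn$, so $\Pr{X \geq \mu + cn} \geq \Pr{E} = e^{-\Theta(n)}$, giving $R_0(cn) \leq O(n)$.

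The main obstacle is verifying that the formal definition of the excess path length in Section~\ref{sec:prelim} admits the clean deterministic decomposition $X = D_n + P$ needed in the upper bound argument; if the definitional relationship is only one-sided (say $X \leq D_n + P + O(1)$), essentially the same reduction goes through but with slightly slacker constants absorbed into the threshold $C$. A secondary (and essentially routine) point is the calibration of $k$ in the lower bound construction, which amounts to taking $k$ equal to $c$ plus an absolute constant that absorbs the $\Theta(n)$ gap between $\mu$ and $n\log n$.
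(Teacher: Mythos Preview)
Your overall plan is correct and matches the paper's strategy for the lower bound on $R_0(cn)$: reduce to Theorem~\ref{thm:excess} via the relationship between $p_0$ and $p_{\log n}$, then exhibit an explicit event of probability $e^{-\Theta(n)}$ for the upper bound on $R_0(cn)$.

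Two points of calibration. First, the clean identity $X = D_n + P$ you hope for does \emph{not} hold: $\varphi_0(L)$ and $\varphi_{\log n}(L)$ are different prefix sets, so in general $p_0(L) \neq n\log n + p_{\log n}(L)$. What the paper uses is exactly your one-sided fallback, namely $p_0(L) \leq n\log n + p_{\log n}(L)$ (Observation~\ref{obs:trie}) together with $\mu_0 \geq n\log n$ (Observation~\ref{obs:p0}), which yields $\Pr{p_0(L)\geq \mu_0+cn}\leq \Pr{p_{\log n}(L)\geq cn}$ (Corollary~\ref{cor:p0 vs p}). Note also that Theorem~\ref{thm:excess} is not phrased as a deviation from $\mathbb{E}[P]$ but directly as $\Pr{p_{\log n}(L)\geq (8c'+16)n}\leq e^{-(c'-1-\ln c')n/4}$; since $\mathbb{E}[P]\geq 0$ this is only a cosmetic difference, and for $c$ large enough you get the desired $e^{-\Omega(n)}$.

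Second, your lower-tail construction is valid but differs from the paper's (Lemma~\ref{lem:ub}, part~3). You force \emph{all} $n$ strings to share a constant-length prefix $0^k$, paying $2^{-kn}$ and gaining $kn$ in path length; the paper instead forces \emph{two} strings to share a prefix of length $(\tfrac{c}{2}+1)n$, paying $2^{-(c/2+1)n}$ and gaining $(c+2)n$ from those two leaves alone, with the remaining $n-2$ strings contributing at least $(n-2)\log(n-2)$ by Observation~\ref{obs:p0}. Both events have probability $e^{-\Theta(n)}$ and both push $p_0(L)$ past $\mu_0+cn$, so either argument suffices. Your version has the mild advantage that the conditional law of $X'$ is exactly that of $p_0$ on a fresh sample; the paper's version has the advantage of tighter dependence of the exponent on $c$.
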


Note that Theorem~\ref{thm:external}  implies that the probability that the nonvoid external
path length is more than $n\log n + dn$ is $e^{-\Theta(n)}$ (for a sufficiently large constant $d$).

\subsection{Related Work}
Showing that Bucket Sort runs in linear expected time when the keys are distributed independently and 
uniformly at random in $[0,1)$ is a classic textbook result~\cite{cormen2009introduction,mitzenmacher2017probability,sanders2019sorting}.
Bounds on the expectation as well as limiting distributions for the running time
have also been studied for different versions of Bucket Sort~\cite{mahmoud2000analytic,devroye1986lecture}. We are not
aware of any work that directly addresses the rate of the running time of Bucket Sort. 
The upper and lower tails of the running time of Quick Sort  have been studied in depth~\cite{janson2015tails, fill2002quicksort}, including
in the regime of large deviations~\cite{mcdiarmid1996large}.

The expected value of the nonvoid external path length of a trie 
is a classic result in applying the methods of analytic combinatorics to the analysis of algorithms~\cite{Knuth, 
  szpankowski2011average, mahmoud1992evolution, sedgewick2013introduction,clement2001dynamical}.
We consider the case in which the binary strings are independent and
random (i.e., the bits are independent and unbiased). In~\cite{Knuth,
  szpankowski2011average, mahmoud1992evolution, sedgewick2013introduction} it is shown that
for random strings, the expected value of the nonvoid external
path length is $n\log n + \Theta(n)$. The variance of the nonvoid external path length and
limiting distributions for it have also been studied extensively for
different
string distributions~\cite{jacquet1988normal,kirschenhofer1989variance,DBLP:books/el/leeuwen90/VitterF90}.

In Knuth~\cite[Section 5.2.2]{Knuth}, the nonvoid external path length is shown
to be proportional to the number of bit comparisons of radix exchange sort. 
The bound in Thm.~\ref{thm:external} therefore applies to the rate
of the number of bit comparisons of radix exchange sort when the strings are
distributed independently and uniformly at random. 

The connection between the running time of sorting algorithms and various trie parameters (including external path length) has also been studied by Seidel~\cite{seidel2010data}, albeit in a significantly different model than ours.
 Specifically,~\cite{seidel2010data} analyzes the expected number of bit comparisons of Quick Sort and Merge Sort when the
input is a randomly permuted set of strings sampled from a given distribution. In Seidel's model, the cost of comparing two
strings is proportional to the length of their longest common prefix. Seidel shows that the running time of these algorithms
can be naturally expressed in terms of parameters of the trie induced by the input strings. We emphasize that our
analysis connects the running time of Bucket Sort to the excess path length in the comparison model (in
which the cost of comparing two keys does not depend on their binary representation).
 
\subsection{Paper Organization}
Preliminaries and definitions are in Sec.~\ref{sec:prelim}. In Section~\ref{sec:red}, we present reductions from the running time of \mbs and the nonvoid external path length to the excess path length. The bound on the upper tail of the excess path length is proved in Sec.~\ref{sec:lbexcess}. Section~\ref{sec:lbsbs} proves a lower 
bound on the rate of \sbs. Upper bounds on the rates are proved in Appendix~\ref{app:ub}. Theorems~\ref{thm:k2},~\ref{thm:klogk} and~\ref{thm:external} are completed in Sec.~\ref{sec:final}. 
Finally, in Sec.~\ref{sec:discussion}, we include a discussion on the difference between the rate of Bucket Sort
and that of Quick Sort.

\section{Preliminaries and Definitions}\label{sec:prelim}
\textbf{Bucket Sort.} The input to Bucket Sort consists of $n$ keys $\bX\triangleq \set{x_1,\ldots,x_n}$
in the interval $[0,1)$.  We define bucket $j$ to be the set of keys
in the interval $[j/n,(j+1)/n)$. Let $\bB(\bX)\triangleq (B_0,\ldots B_{n-1})$ be
the \emph{occupancy vector} for input $\bX$, where $B_j$ denotes the number of
keys in $\bX$ that fall in bucket $j$.

The buckets are separately sorted and the final output is computed by
scanning the sorted buckets in increasing order.  The initial
assignment of keys to buckets and the final scanning of the sorted
buckets takes $\Theta(n)$ time. We henceforth focus only on the time spent
on sorting the keys in each bucket.

We consider the two natural options for sorting buckets:
\begin{enumerate*}[label=(\roman*)]
\item Sort $b$ keys in time $\Theta(b^2)$, using a sorting
  algorithm such as Insertion Sort or Bubble Sort. We refer to this option as \sbs. 
\item Sort $b$ keys in time $\Theta(b\log b)$ using a sorting algorithm such as Merge Sort or Heap Sort. We refer to this option as \mbs. 
\end{enumerate*}
Let $[n]$ denote the set $\set{0,\ldots, n-1}$ and let
$\bB=(B_0,\ldots,B_{n-1})$ denote an arbitrary occupancy vector. We define the functions
\begin{center}
  $f(\bB) \triangleq  \sum_{j\in[n]} B^2_j$ \hspace{1.5cm} $g(\bB) \triangleq  \sum_{j\in[n], B_j>0} B_j\log B_j$.
\end{center}
We let $T_{b^2}(\bX)$ and $T_{b\log b}(\bX)$ denote the running time
on input $\bX$ of \sbs and \mbs, respectively.  Then,
$T_{b^2}(\bX)=\Theta(n+f(\bB(\bX))$ and
$T_{b\log b}(\bX)=\Theta(n+g(\bB(\bX))$.~\footnote{Interestingly, the
  sum of squares of bin occupancies, i.e., $f(\bB)$, also appears in
  the FKS perfect hashing construction~\cite{fredman1982storing}.}

  \medskip\noindent \textbf{Excess Path Length and Tries.}  We let
  $\size{\alpha}$ denote the length of a binary string
  $\alpha \in \set{0,1}^*$.  For a set $L$, let $\size{L}$ denote the
  cardinality of $L$.

\begin{definition}
  A set of strings $\set{\alpha_1,\ldots, \alpha_s}$ is
  \emph{prefix-free} if, for every $i\neq j$, the string $\alpha_i$ is not
  a prefix of $\alpha_j$.
\end{definition}

A \emph{trie} is a rooted binary tree with edges labeled $\set{0,1}$ such that two edges emanating from the same trie node are labeled differently.  For a binary string $\alpha$, let $\pi(\alpha)$ denote
the trie node $v$, where the path from the root to $v$ is labeled
$\alpha$.  We say that a trie node $u$ is a \emph{predecessor} of $v$
if $u$ is in the path from the root to $v$.  For a set $U$ of trie
nodes, the reduced trie that contains $U$ is denoted by $T(U)$,
namely, $T(U)$ consists of $U$ and all the predecessors of nodes in
$U$.  Given a set of binary strings $L$, let $T(L)$ denote the trie
$T(\pi(L))$. If the set $L$ is prefix-free and contains only finite-length strings, then every node in $\pi(L)$ is a leaf of $T(L)$.

For a set $L$ of prefix-free binary strings, let $\varphi_0(L)$ denote
the set of minimal prefixes of strings in $L$ subject to the
constraint that $\varphi_0(L)$ is prefix-free. The trie $T(\varphi_0(L))$
is called the \emph{minimal} trie on $L$. Note that the structure
of $\varphi_0(L)$ (or of $T(\varphi_0(L))$) does not change if we append more bits to the
strings in $L$.

The following definition extends the definition of $\varphi_0(L)$ by
requiring that the prefixes have length at least $k$.
\begin{definition}[minimal $k$-prefixes]\label{def:min k prefix}
  Let $L= \set{\alpha_0,\ldots,\alpha_{n-1}}$ denote a set of $n$
  distinct infinite binary strings.  Given a parameter $k\geq 0$, the
  set of\emph{minimal k-prefixes of $L$}, denoted by
  $\varphi_k(L)\triangleq \set{\beta_0,\ldots, \beta_{n-1}}$, is the set that
  satisfies the following properties:
 \begin{enumerate}
\item for all $i\in[n]$, the string $\beta_i$ is a prefix of $\alpha_i$, 
\item for all $i\in[n]$, $\size{\beta_i} \geq k$,  
\item The set $\varphi_k(L)$ is prefix-free,
\item $\displaystyle{\sum_{i=0}^{n-1} \size{\beta_i}}$ is minimal among
  all sets that satisfy the first $3$ conditions.
\end{enumerate}
\end{definition}

The embedding of $\varphi_k(L)$ in a trie maps every string in
$\varphi_k(L)$ to a distinct leaf of depth at least $k$.
The definition can be modified to handle prefix-free sets of finite
strings by appending an arbitrary infinite string (say,
zeros) to each finite string. In this
paper, we are interested in the following trie parameter defined on $\varphi_k(L)$:

\begin{definition}
The \emph{$k$-excess path
    length} $p_k(L)$ of a set $L$ of $n$ distinct infinite binary strings is defined as:
\begin{center}
  $p_k(L)\triangleq \sum_{\alpha\in\varphi_k(L)} \parentheses{\size{\alpha} - k}$.
\end{center} 

\end{definition}

In~\cite{sedgewick2013introduction}, $p_0(L)$ is called the
\emph{nonvoid external path length} of the minimal trie on $L$. When $k=\ceil{\log \size{L}}$, we simply refer to $p_k(L)$ as the \emph{excess
  path length} of $L$.

\medskip\noindent \textbf{Distributions.}  Let $\mathcal{X}_n$ denote
the uniform distribution over $[0,1)^n$. Note that if the set
$\bX = \set{x_0,\ldots,x_{n-1}}$ is chosen according to
$\mathcal{X}_n$, then $x_0,\ldots,x_{n-1}$ are chosen independently
and uniformly at random from the interval $[0,1)$. Let $\mu_{b^2}$
(res. $\mu_{b\log b}$) denote the expected ruuning time $T_{b^2}(X)$
(resp., $T_{b\log b}(X)$) when $X\sim\mathcal{X}_n$. Similarly, let
$\mu_{f}$ (res. $\mu_{g}$) denote the expected values of $f(\bB(X))$
(resp., $g(\bB(X))$) when $X\sim\mathcal{X}_n$. It is known that
$\mu_f = 2n-1$
(see~\cite{cormen2009introduction,mitzenmacher2017probability}), and
consequently, we have that $\mu_{b^2} = \Theta(n)$.  Since $g\leq f$,
we also have $\mu_g= \Theta(n)$ as well as
$\mu_{b\log b} = \Theta(n)$.

Let $\mathcal{L}_n$ denote the uniform distribution over $n$ infinite
binary strings. Note that if $\bL = \set{\alpha_0,\ldots,\alpha_{n-1}}$ is chosen according
to $\mathcal{L}_n$, then all the bits of the strings are
independent and unbiased. We let $\mu_0$ denote the expected value of the external nonvoid path $p_0(L)$ when $L\sim\mathcal{L}_n$. It is know that $\mu_0=n\log n + \Theta(n)$ (see ~\cite{Knuth,
  szpankowski2011average, sedgewick2013introduction}).

\medskip\noindent \textbf{Rates.}
Let $R_{b^2}(\cdot)$ (resp., $R_{b\log b}(\cdot)$)
denote the rate of $T_{b^2}(X)$
(resp., $T_{b\log b}(X)$) when $X\sim\mathcal{X}_n$.
Similarly, let
$R_f(\cdot)$ (resp., $R_g(\cdot)$) 
denote the rate of $f(\bB(X))$
(resp., $g(\bB(X))$) when $X\sim\mathcal{X}_n$.

We first note that, to study the asymptotic behavior of $R_{b^2}$ (for
sufficiently large deviations) it suffices to study the asymptotic
behavior of $R_f$. The proof of the following appears in
Appendix~\ref{app:reduction}. 
\begin{observation}\label{obs:reduction} For every $c>0$, there exist constants
  $\delta_1 = \Theta(c)$ and $\delta_2=\Theta(c)$ such that:
\begin{center}
$R_f(\delta_1 \cdot n) \leq R_{b^2}(c\cdot n) \leq R_f(\delta_2\cdot n)$.
\end{center}
\end{observation}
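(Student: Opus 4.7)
The plan is to exploit the tight deterministic link between $T_{b^2}(X)$ and $f(\bB(X))$: by the definition of \sbs, there exist absolute constants $0<\alpha\leq\beta$ such that
\[
\alpha\bigl(n+f(\bB(X))\bigr)\;\leq\;T_{b^2}(X)\;\leq\;\beta\bigl(n+f(\bB(X))\bigr)
\]
holds pointwise, for every realization of $X$. Taking expectations under $X\sim\mathcal{X}_n$ and recalling $\mu_f=2n-1$, one gets $\alpha(n+\mu_f)\leq\mu_{b^2}\leq\beta(n+\mu_f)$, which in particular bounds both $\mu_{b^2}-\alpha(n+\mu_f)$ and $\beta(n+\mu_f)-\mu_{b^2}$ by $(\beta-\alpha)(n+\mu_f)\leq 3(\beta-\alpha)n$. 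Both inequalities in the observation will then follow from an inclusion of events, which, by monotonicity of $-\ln\Pr[\cdot]$, translates directly into an inequality of rates.

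For the upper inequality $R_{b^2}(cn)\leq R_f(\delta_2 n)$, the plan is to show $\{f(\bB(X))\geq\mu_f+\delta_2 n\}\subseteq\{T_{b^2}(X)\geq\mu_{b^2}+cn\}$. On the event on the left, the lower sandwich forces $T_{b^2}(X)\geq\alpha(n+\mu_f)+\alpha\delta_2 n$; picking $\delta_2=(c+3(\beta-\alpha))/\alpha=\Theta(c)$ will be enough to push this above $\mu_{b^2}+cn$, using the $3(\beta-\alpha)n$ slack computed above.

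For the lower inequality $R_f(\delta_1 n)\leq R_{b^2}(cn)$, the plan is the symmetric inclusion $\{T_{b^2}(X)\geq\mu_{b^2}+cn\}\subseteq\{f(\bB(X))\geq\mu_f+\delta_1 n\}$. Here the upper sandwich rearranges to $f(\bB(X))\geq(\mu_{b^2}+cn)/\beta-n$; combining with $\mu_{b^2}\geq\alpha(n+\mu_f)$ will yield the inclusion with $\delta_1=(c-3(\beta-\alpha))/\beta$, which is positive and $\Theta(c)$ once $c$ exceeds the absolute constant $3(\beta-\alpha)$. The only real obstacle is bookkeeping of the $O(1)$ slack between $\alpha$ and $\beta$; this slack is also what forces the implicit lower bound on $c$ that the observation relies on, and is consistent with the threshold $C$ appearing in Theorem~\ref{thm:k2}.
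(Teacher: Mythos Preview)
Your proof is correct and follows essentially the same approach as the paper's: both use the pointwise sandwich $\alpha(n+f)\le T_{b^2}\le\beta(n+f)$ (the paper writes $c_1,c_2$ for your $\alpha,\beta$), translate it into an inclusion of tail events, and arrive at the identical choice $\delta_1=(c-3(\beta-\alpha))/\beta$. You actually spell out both directions, whereas the paper proves only $R_f(\delta_1 n)\le R_{b^2}(cn)$ and declares the other ``similar''; your explicit remark that $\delta_1>0$ forces $c>3(\beta-\alpha)$ is also a point the paper leaves implicit.
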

An analogous statement holds for the rates $R_{b\log b}$ and $R_g$. The rate of the nonvoid external
path length $p_0(L)$ is denoted by $R_{0}(\cdot)$.

\section{Reductions}\label{sec:red}
\subsection{Balls-into-Bins Abstraction}
We interpret the assignment of keys to buckets using a balls-into-bins
abstraction.  The keys correspond to balls, and the buckets correspond
to bins.  The assumption that $\bX\sim\mathcal{X}_n$ implies that the
balls choose the bins independently and uniformly at random. The
value $B_j$ then equals the occupancy of bin $j$.

A similar balls-into-bins abstraction holds for the embedding of the
minimal $(\log n)$-prefixes of $L\sim\mathcal{L}_n$ in a trie
(assuming $n$ is a power of $2$).  Indeed, let
$\set{v_0,\ldots,v_{n-1}}$ denote the $n$ nodes of the trie $T(L)$ at
depth $\log n$.  For a node $v_j$, we say that a string $\alpha$
\emph{chooses} $v_j$, if the path labeled $\alpha$ contains $v_j$.
Since the strings are random, each string chooses a node of depth
$\log n$ independently and uniformly at random. Let $C_j$ denote the
number of strings in $L$ who choose node $v_j$.\footnote{Formally,
  $T(L)$ may contain a subset of these $n$ nodes. If a node $v_j$ at
  depth $\log n$ is not chosen by any string, then define $C_j=0$.} We
refer to $C_j$ as the \emph{occupancy} of $v_j$ with respect to $L$
and define the vector $\bC(L) \triangleq (C_0,\ldots, C_{n-1})$.

\begin{observation}\label{obs:same}
  When $\bX\sim \mathcal{X}_n$ and $L \sim \mathcal{L}_n$, the occupancy vector $\bB(\bX)$ 
  has the same joint probability distribution as  $\bC(L)$.
\end{observation}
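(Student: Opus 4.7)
The plan is to realize both $\bB(\bX)$ and $\bC(L)$ as the occupancy vectors of one and the same balls-into-bins experiment: $n$ balls thrown independently and uniformly at random into $n$ indexed bins. Once this is done, the joint distributions of the two vectors must agree because both equal the common multinomial law with parameters $n$ and $(1/n,\ldots,1/n)$. So the proof reduces to verifying, in each of the two settings separately, that (i) each ``ball'' (a key $x_i$, respectively a string $\alpha_i$) lands in any given one of the $n$ ``bins'' (bucket $j$, respectively node $v_j$) with probability exactly $1/n$, and that (ii) the assignments across the $n$ balls are mutually independent.

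First I would handle the Bucket Sort side. A key $x_i \sim \text{Uniform}[0,1)$ falls in bucket $j$ iff $x_i \in [j/n,(j+1)/n)$, an event of probability $1/n$ for every $j \in [n]$; independence of the $x_i$'s immediately lifts to independence of the bin assignments, so $\bB(\bX)$ has the claimed multinomial distribution.

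Next I would do the trie side. Using the canonical bijection between the nodes $v_0,\ldots,v_{n-1}$ at depth $\log n$ and the binary strings in $\{0,1\}^{\log n}$, the string $\alpha_i$ chooses $v_j$ precisely when its first $\log n$ bits spell the label of the root-to-$v_j$ path. Since the bits of $\alpha_i$ are i.i.d.\ unbiased, this has probability $2^{-\log n} = 1/n$, uniformly in $j$. Independence across $i$ follows because the strings $\alpha_0,\ldots,\alpha_{n-1}$ are mutually independent. Hence $\bC(L)$ also realizes the same multinomial law, and matching the indexing of buckets with the indexing of depth-$\log n$ nodes gives the equality of joint distributions.

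There is no serious obstacle here; the observation is essentially a restatement of the equivalence of two natural uniform models. The only bookkeeping item worth flagging is the implicit assumption (already noted in the paper) that $n$ is a power of $2$, so that the trie has exactly $n$ nodes at depth $\log n$ and the two indexings can be placed in bijection.
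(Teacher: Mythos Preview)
Your proposal is correct and mirrors the paper's own reasoning: the paper does not give a separate proof of this observation but establishes it via the discussion immediately preceding the statement, namely that in both settings each of the $n$ items lands in one of the $n$ indexed bins independently and uniformly at random (under the assumption that $n$ is a power of $2$). Your write-up simply makes this balls-into-bins identification explicit and names the resulting common law as multinomial, which is exactly the intended argument.
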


\subsection{Lower Bounding the Rate of \mbs}
By Obs.~\ref{obs:reduction}, to prove a lower bound in
$R_{b \log b}$ it suffices to prove a lower bound on $R_g$. In this
section we show how to lower bound $R_g$ by bounding the upper tail
probability of the excess path length $p_{\log n}(L)$.  We begin with
the following observation about the nonvoid external path length
$p_0(L)$:
\begin{observation}\label{obs:p0}
  For every set $L$ of $n$ infinite prefix-free binary strings,
  $p_0(L)\geq n\log n$.  
\end{observation}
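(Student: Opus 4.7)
The plan is to reduce the inequality to a well-known extremal property of prefix-free codes. By Definition~\ref{def:min k prefix} (applied with $k=0$), the set $\varphi_0(L)=\{\beta_0,\ldots,\beta_{n-1}\}$ is a prefix-free set of $n$ distinct finite binary strings, and $p_0(L)=\sum_{i=0}^{n-1}|\beta_i|$. So it suffices to show that any prefix-free set of $n$ binary strings with lengths $\ell_0,\ldots,\ell_{n-1}$ satisfies $\sum_i \ell_i \geq n\log n$.

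The first step is to invoke Kraft's inequality: because $\varphi_0(L)$ is prefix-free, the lengths satisfy
\begin{equation*}
\sum_{i=0}^{n-1} 2^{-\ell_i} \;\leq\; 1.
\end{equation*}
The second step is to apply Jensen's inequality to the convex function $x\mapsto 2^{-x}$:
\begin{equation*}
2^{-\frac{1}{n}\sum_i \ell_i} \;\leq\; \frac{1}{n}\sum_{i=0}^{n-1} 2^{-\ell_i} \;\leq\; \frac{1}{n}.
\end{equation*}
Taking logarithms of both sides and multiplying by $-n$ yields $\sum_i \ell_i \geq n\log n$, which is exactly $p_0(L)\geq n\log n$.

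An equivalent combinatorial route would be to view $T(\varphi_0(L))$ as a binary tree with $n$ leaves and to prove by induction on $n$ that the external path length of any binary tree with $n$ leaves is at least $n\log n$; the inductive step follows by splitting the $n$ leaves between the two subtrees of the root and using convexity of $x\log x$. Either way, there is no real obstacle here: the only thing to be careful about is applying Kraft's inequality (or the tree induction) to the finite prefix-free set $\varphi_0(L)$ rather than to the original infinite strings. Since $\varphi_0(L)$ is prefix-free by construction (Definition~\ref{def:min k prefix}) and has cardinality $n$, the hypotheses of both arguments are immediate.
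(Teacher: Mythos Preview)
Your proof is correct. The paper does not actually supply a proof for Observation~\ref{obs:p0}; it is stated as a bare observation, treated as a well-known lower bound on the external path length of a binary tree with $n$ leaves (equivalently, on the total codeword length of any binary prefix code on $n$ words). Your Kraft-plus-Jensen argument is one of the standard derivations of this fact, and the alternative induction you sketch is the other; either fills the gap cleanly. The only minor remark is that you should check that $\varphi_0(L)$ indeed has cardinality exactly $n$, which follows because the $n$ infinite strings in $L$ are pairwise distinct (being prefix-free), so their minimal separating prefixes are distinct as well.
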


Now consider an arbitrary vector $\bC(L)$ and apply Observation~\ref{obs:p0} to each node of depth $\log n$
separately. We obtain the following corollary.
\begin{corollary}\label{cor:same} For every set $L$ of $n$ infinite
  prefix-free binary strings, $p_{\log n}(L) \geq g(\bC(L))$.
\end{corollary}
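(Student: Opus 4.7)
The plan is to decompose $\varphi_{\log n}(L)$ according to which depth-$\log n$ node each prefix descends from, and then apply Observation~\ref{obs:p0} inside each such subtree.

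First I would fix the trie $T(L)$ and, for each index $j\in[n]$ with $C_j\geq 1$, let $L_j$ denote the multiset of suffixes obtained by taking every string $\alpha \in L$ that passes through the depth-$\log n$ node $v_j$ and stripping off its first $\log n$ bits. Since the original strings are infinite, prefix-free, and pairwise distinct below $v_j$, $L_j$ is itself a set of $C_j$ infinite prefix-free binary strings.

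Next I would verify the identity
\[
 p_{\log n}(L) \;=\; \sum_{j:\, C_j\geq 1} p_0(L_j).
\]
This follows directly from Definition~\ref{def:min k prefix}: every string $\beta_i\in\varphi_{\log n}(L)$ has length at least $\log n$, so its first $\log n$ bits determine a unique depth-$\log n$ node $v_j$ in $T(L)$, and the tail $\beta_i'$ (the bits past depth $\log n$) is a prefix of the corresponding suffix in $L_j$. Conditions (1)--(3) in Definition~\ref{def:min k prefix} on $\varphi_{\log n}(L)$ translate, when restricted to the subtree rooted at $v_j$, exactly into the conditions defining $\varphi_0(L_j)$, and the global minimality condition (4) decouples across the independent subtrees. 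Hence $\{\beta_i' : \beta_i \text{ passes through } v_j\} = \varphi_0(L_j)$, and summing the definition of $p_{\log n}$ over $j$ yields the identity above.

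Finally, applying Observation~\ref{obs:p0} to each $L_j$ gives $p_0(L_j)\geq C_j\log C_j$, where the bound is trivially $0$ when $C_j=1$, so
\[
 p_{\log n}(L) \;=\; \sum_{j:\, C_j\geq 1} p_0(L_j) \;\geq\; \sum_{j:\, C_j\geq 1} C_j\log C_j \;=\; g(\bC(L)),
\]
as required. The only nontrivial step is the decomposition identity in the second paragraph; once one is comfortable that $\varphi_{\log n}$ behaves subtree-wise like $\varphi_0$ on the suffixes, the rest is immediate from the hint in the text.
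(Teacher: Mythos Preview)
Your proposal is correct and follows essentially the same approach as the paper: the paper's one-line justification is to ``apply Observation~\ref{obs:p0} to each node of depth $\log n$ separately,'' and your decomposition $p_{\log n}(L)=\sum_{j:C_j\ge 1} p_0(L_j)$ followed by $p_0(L_j)\ge C_j\log C_j$ is exactly this argument spelled out in detail.
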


\medskip\noindent 
We lower bound the rate of $g(\bB(\bX))$ as
follows. 
\begin{lemma}\label{lem:red}
For every $c>0$, 
\begin{align}\label{eq:g vs p}
  \Prr{X\gets \mathcal{X}_n}{g(\bB(\bX)) \geq \mu_g+ cn} &\leq
\Prr{L\gets\mathcal{L}_n}{p_{\log n}(L) \geq c n} \;.
\end{align}
\end{lemma}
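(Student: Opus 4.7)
The plan is to chain three simple steps: a distributional equivalence, a pointwise inequality, and a monotonicity observation about the threshold. None of these require any new work beyond invoking results already established earlier in the excerpt.

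First, I would use Observation~\ref{obs:same} to move from the bucket-sort setting to the trie setting. Since the occupancy vectors $\bB(\bX)$ (with $\bX \sim \mathcal{X}_n$) and $\bC(L)$ (with $L \sim \mathcal{L}_n$) have the same joint distribution, the random variables $g(\bB(\bX))$ and $g(\bC(L))$ are equal in distribution. In particular, for any threshold $t$,
\[
\Prr{\bX\gets\mathcal{X}_n}{g(\bB(\bX)) \geq t} \;=\; \Prr{L\gets\mathcal{L}_n}{g(\bC(L)) \geq t}.
\]

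Next, I would apply Corollary~\ref{cor:same}, which gives the pointwise bound $p_{\log n}(L) \geq g(\bC(L))$ for every set $L$ of $n$ infinite prefix-free binary strings. Since this holds for each realization, the event $\{g(\bC(L)) \geq t\}$ is contained in the event $\{p_{\log n}(L) \geq t\}$, so
\[
\Prr{L\gets\mathcal{L}_n}{g(\bC(L)) \geq t} \;\leq\; \Prr{L\gets\mathcal{L}_n}{p_{\log n}(L) \geq t}.
\]

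Finally, taking $t = \mu_g + cn$, I would use the fact that $\mu_g \geq 0$ (indeed $\mu_g = \Theta(n)$, as noted in the preliminaries) to conclude that $\mu_g + cn \geq cn$, which only shrinks the event when lowered to $cn$:
\[
\Prr{L\gets\mathcal{L}_n}{p_{\log n}(L) \geq \mu_g + cn} \;\leq\; \Prr{L\gets\mathcal{L}_n}{p_{\log n}(L) \geq cn}.
\]
Concatenating the three inequalities yields the lemma. There is no real obstacle here; the only subtlety worth flagging is the bookkeeping asymmetry between the two sides of~(\ref{eq:g vs p}) — the left-hand deviation is measured from the mean $\mu_g$ while the right-hand threshold is the bare $cn$ — which is reconciled by the non-negativity of $\mu_g$ in the last step and accounts for the deliberate slack in the inequality.
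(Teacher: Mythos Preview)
Your proposal is correct and follows essentially the same approach as the paper: the same three ingredients (Observation~\ref{obs:same}, Corollary~\ref{cor:same}, and the nonnegativity of $\mu_g$) are combined in the same way, only in a slightly different order---the paper drops the $\mu_g$ term first and then passes to tries, whereas you pass to tries first and drop $\mu_g$ at the end.
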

\begin{proof}
Recall that $\mu_g$ denotes the expected value of of $g(\bB(\bX))$. Since
$\mu_g>0$, we have that
$\Pr{g(\bB(X))\geq \mu_g+cn} \leq \Pr{g(\bB(X))\geq cn}$. 
Observation~\ref{obs:same} implies that
\begin{align*}
 \Prr{X\gets \mathcal{X}_n}{g(\bB(\bX)) \geq cn} &=
\Prr{L\gets\mathcal{L}_n}{g(\bC(L)) \geq c n}\;.
\end{align*}
The claim then follows by Corollary~\ref{cor:same}.
\end{proof}
\medskip\noindent
Hence, a lower bound on the rate  of $g(\bB(X))$ follows by bounding the RHS of Eq.~\ref{eq:g vs p}.

\subsection{Lower Bounding the Rate of the  Nonvoid External Path Length}
In this section, we show how to use the upper tail of $p_{\log n}(L)$ to lower bound the rate 
of the nonvoid external path length $p_0(L)$.
\begin{observation}\label{obs:trie}
For every set $L$ of $n$ infinite prefix-free binary strings, we have that:
\begin{center}
$p_0(L) \leq n\log n + p_{\log n}(L)$.
\end{center}
\end{observation}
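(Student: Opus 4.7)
The plan is to reduce the observation to a per-string inequality. For each $i\in[n]$, let $\beta_i^{(0)}$ and $\beta_i^{(\log n)}$ denote the elements of $\varphi_0(L)$ and $\varphi_{\log n}(L)$, respectively, that are prefixes of $\alpha_i$. The identities $p_0(L)=\sum_i|\beta_i^{(0)}|$ and $p_{\log n}(L)+n\log n=\sum_i|\beta_i^{(\log n)}|$, combined with the per-string bound $|\beta_i^{(\log n)}|\geq|\beta_i^{(0)}|$, immediately yield the observation by summing over $i$.

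To prove the per-string inequality, I would first establish the following local characterization: $|\beta_i^{(0)}|$ equals the smallest integer $\ell$ such that the length-$\ell$ prefix of $\alpha_i$ is not a prefix of any $\alpha_j$ with $j\neq i$. The ``no longer than'' direction is immediate, since if $\beta_i^{(0)}$ were a prefix of some $\alpha_j$, then both $\beta_i^{(0)}$ and $\beta_j^{(0)}$ would be prefixes of $\alpha_j$, contradicting prefix-freeness of $\varphi_0(L)$. The ``no shorter than'' direction is an exchange argument using the sum-minimality clause of Definition~\ref{def:min k prefix}: any strictly shorter prefix of $\alpha_i$ that avoided being a prefix of every $\alpha_j$ for $j\neq i$ could replace $\beta_i^{(0)}$ in $\varphi_0(L)$ without breaking prefix-freeness, and would produce a valid candidate with smaller sum, contradicting minimality.

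Given this characterization, the per-string inequality is essentially one line. The string $\beta_i^{(\log n)}$ is a prefix of $\alpha_i$, and by the same prefix-freeness argument applied to $\varphi_{\log n}(L)$ in place of $\varphi_0(L)$, it cannot be a prefix of any $\alpha_j$ with $j\neq i$. Hence $|\beta_i^{(\log n)}|$ is at least the smallest length of a prefix of $\alpha_i$ with this property, which by the characterization equals $|\beta_i^{(0)}|$. The main obstacle is therefore isolated in the first step: translating the global sum-minimality definition of $\varphi_0(L)$ into the local ``shortest distinguishing prefix'' characterization above; once this equivalence is in hand, the rest of the proof is purely formal.
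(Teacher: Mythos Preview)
Your proof is correct and follows the same route as the paper: both establish the per-string inequality $|\beta_i^{(0)}|\leq|\beta_i^{(\log n)}|$ and sum, with the paper simply asserting in one line that each string in $\varphi_0(L)$ is a prefix of the corresponding string in $\varphi_{\log n}(L)$, while you supply the justification via the shortest-distinguishing-prefix characterization. As a side remark, the sum inequality also follows in one stroke directly from Definition~\ref{def:min k prefix}: $\varphi_{\log n}(L)$ satisfies conditions 1--3 for $k=0$, so by the minimality clause $\sum_i|\beta_i^{(0)}|\leq\sum_i|\beta_i^{(\log n)}|$, bypassing any per-string analysis.
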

\begin{proof}
  The strings in $\varphi_0(L)$ are themselves prefixes of strings in
  $\varphi_{\log n}(L)$. We therefore get that
  $\sum_{\alpha\in \varphi_0(L)} \size{\alpha} \leq \sum_{{\beta}\in
    \varphi_{\log n}(L)}\size{\beta}$, and the claim follows.
\end{proof}

\medskip\noindent
Observation~\ref{obs:p0}
implies that $\mu_0\geq n\log n$.
Together with Obs.~\ref{obs:trie} this implies that:
\begin{corollary}\label{cor:p0 vs p} For every $L\sim\mathcal{L}_n$ and every $c>0$.
\begin{center}
$\Pr{p_0(L)\geq \mu_0 + cn} \leq \Pr{p_{\log n}(L) \geq cn}$.
\end{center}
\end{corollary}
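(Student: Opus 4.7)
The plan is to derive the corollary as a containment of events, combining Observations~\ref{obs:p0} and~\ref{obs:trie}. First, I would take expectations in Observation~\ref{obs:p0}: since $p_0(L)\geq n\log n$ holds pointwise for every prefix-free $L$, and $L\sim\mathcal{L}_n$ produces pairwise distinct (hence prefix-free) infinite strings with probability $1$, taking expectations over $\mathcal{L}_n$ gives $\mu_0 \geq n\log n$.

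Next, Observation~\ref{obs:trie} rearranges to $p_0(L)-n\log n \leq p_{\log n}(L)$ pointwise in $L$. Therefore, on the event $\{p_0(L)\geq \mu_0+cn\}$ I can chain
\[
p_{\log n}(L) \;\geq\; p_0(L)-n\log n \;\geq\; \mu_0+cn-n\log n \;\geq\; cn,
\]
where the first inequality is Observation~\ref{obs:trie}, the second uses the assumed lower bound on $p_0(L)$, and the last uses $\mu_0\geq n\log n$ from the previous paragraph. This establishes the event inclusion $\{p_0(L)\geq \mu_0+cn\}\subseteq \{p_{\log n}(L)\geq cn\}$, and monotonicity of probability yields the stated inequality.

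There is no serious obstacle here: the corollary is purely an assembly step on top of the two observations. The only mild subtlety is verifying that Observation~\ref{obs:p0}, which is stated for prefix-free $L$, is applicable when $L\sim\mathcal{L}_n$. Since the strings are infinite, prefix-freeness is equivalent to pairwise distinctness, and two independent uniformly random infinite bit strings agree with probability $0$; a union bound over the $\binom{n}{2}$ pairs shows the prefix-free event has probability $1$ under $\mathcal{L}_n$, so one may condition on it at the outset without affecting any of the probability bounds.
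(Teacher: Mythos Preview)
Your proof is correct and matches the paper's approach exactly: the paper also derives $\mu_0\geq n\log n$ from Observation~\ref{obs:p0}, then combines this with the pointwise bound of Observation~\ref{obs:trie} to obtain the event inclusion and hence the probability inequality. Your remark about prefix-freeness holding almost surely under $\mathcal{L}_n$ is a reasonable clarification that the paper leaves implicit.
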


\section{The Upper Tail of the Excess Path Length}\label{sec:lbexcess}
We bound the upper tail of $p_{\log n}(L)$ as follows:
\begin{theorem}\label{thm:excess}
Let $L \sim \mathcal{L}_n$. For every $c>0$:
\begin{align*}
\Pr{p_{\log n}(L)\geq (8c + 16)\cdot n} &\leq  \exp\parentheses{-\frac{c-1-\ln c}{4}\cdot n }  \;.
\end{align*}
\end{theorem}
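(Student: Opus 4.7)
My plan is to attack the tail of $p_{\log n}(L)$ by a Chernoff-style argument, exploiting the balls-into-bins decomposition from Sec.~\ref{sec:red} at depth $\log n$. Conditioning on the occupancy vector $\bC = (C_0, \ldots, C_{n-1})$, the subtries rooted at the $n$ depth-$\log n$ trie nodes are mutually independent, and each is fed by a fresh batch of i.i.d.\ uniform random infinite binary strings. Letting $V_k$ denote the nonvoid external path length of $k$ such strings, this gives the conditional representation $p_{\log n}(L) = \sum_{j=0}^{n-1} V^{(j)}$ with $V^{(j)} \stackrel{d}{=} V_{C_j}$ mutually independent given $\bC$.

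I would then apply Markov's inequality to $e^{s\, p_{\log n}(L)}$ and optimize $s$. The technical heart is an upper bound on $\phi_k(s) := \mathbb{E}[e^{sV_k}]$. I would derive this inductively from the distributional recurrence
\[
V_k \stackrel{d}{=} k + V_{k_L} + V_{k-k_L}, \qquad k_L \sim \text{Binomial}(k, 1/2), \qquad k \geq 2,
\]
with independent copies on the right and base cases $V_0 = V_1 = 0$. The natural ansatz is $\phi_k(s) \leq \exp\!\bigl(s\, k \log k + \alpha(s)\, k\bigr)$ for $s$ in a suitable range; this respects both the known asymptotic $\mathbb{E}[V_k] = k\log k + \Theta(k)$ and the deterministic lower bound $V_k \geq k\log k$ from Obs.~\ref{obs:p0}. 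Substituting into the conditional decomposition and using $\sum_j C_j = n$ then yields
\[
\mathbb{E}[e^{s\, p_{\log n}(L)}] \;\leq\; e^{\alpha(s)\, n} \cdot \mathbb{E}\bigl[e^{s\, g(\bC)}\bigr],
\]
where $g(\bC) = \sum_j C_j \log C_j$. The remaining multinomial-occupancy MGF $\mathbb{E}[e^{s g(\bC)}]$ can be controlled by a standard Chernoff argument on balls-into-bins (large bin loads are exponentially rare), and optimizing $s$ as a function of $c$ should produce the Cram\'er-type rate $(c - 1 - \ln c)/4$, which is the signature rate function of an Exp$(1)$ sum; the constants $8$ and $16$ in the stated threshold absorb $\alpha(s)$ and the Chernoff optimization slack.

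The main obstacle I anticipate is establishing the MGF bound $\phi_k(s) \leq \exp(sk\log k + \alpha(s) k)$ uniformly in $k$. The deterministic growth $V_k \geq k\log k$ pins down the $sk\log k$ term, and propagating it through the binomial split in the recurrence requires carefully bounding $\mathbb{E}\bigl[\exp\bigl(s(k_L \log k_L + (k-k_L)\log(k-k_L))\bigr)\bigr]$ via Stirling- or convexity-type estimates. The admissible range of $s$ and the growth of $\alpha(s)$ are what ultimately fix the numerical constants in the theorem; if the direct induction proves too delicate, a fallback is to Poissonize (so the $C_j$ are i.i.d.\ Poisson$(1)$ and the $V^{(j)}$ are i.i.d., reducing the sum MGF to a power) and depoissonize at the end with the standard $O(\sqrt{n})$ overhead, which is absorbed into the exponential bound.
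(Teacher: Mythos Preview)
Your approach is structurally different from the paper's, and the obstacle you flag is more serious than bookkeeping. The paper does not condition on the depth-$\log n$ occupancy vector at all; instead it inserts the strings $\alpha_1,\ldots,\alpha_n$ one at a time and studies the increment $\Delta_i \triangleq p_{\log n}(L^{(i)}) - p_{\log n}(L^{(i-1)})$. When $\alpha_i$ arrives, at most one existing leaf needs to be extended, and the two resulting leaf depths coincide, so $\Delta_i \le 2\,|\delta_i^{(i)}|$ where $\delta_i^{(i)}$ is the excess of the new leaf below depth $\log n$. A short argument (Markov on the bin count $n_0$ at depth $\log n$, plus a halving/Chernoff step down the trie) gives $\Pr{\Delta_i \ge 16(\tau+2)} \le 2^{-\tau}$, uniformly over any conditioning on $\Delta_1,\ldots,\Delta_{i-1}$. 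Hence $\Delta_i' \triangleq \Delta_i/16 - 1$ is unconditionally sequentially dominated by i.i.d.\ $\mathrm{Ge}(1/2)$ variables, and Janson's tail bound for sums of geometrics delivers exactly the rate $(c-1-\ln c)/4$. The constants $8$ and $16$ in the threshold arise from the rescaling $\Delta_i = 16(\Delta_i'+1)$ and the mean of $\mathrm{Ge}(1/2)$, not from MGF slack.

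Regarding your plan: the inductive step for $\phi_k(s)\le\exp(sk\log k+\alpha(s)k)$ does not close. Substituting the ansatz into $\phi_k(s)=e^{sk}\,\mathbb{E}\bigl[\phi_{k_L}(s)\phi_{k-k_L}(s)\bigr]$ and cancelling $e^{\alpha(s)k}$ reduces to the requirement $\mathbb{E}\bigl[e^{sW_k}\bigr]\le 1$, where $W_k \triangleq k_L\log k_L+(k-k_L)\log(k-k_L)-k\log k+k$. By convexity of $x\mapsto x\log x$ one has $W_k\ge 0$ deterministically (with equality only at $k_L=k/2$), so $\mathbb{E}[e^{sW_k}]>1$ for every $s>0$ and the induction fails as written. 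A repair such as $\phi_k(s)\le\exp\bigl(sk\log k+\alpha(s)(k-1)\bigr)$ would demand only $\mathbb{E}[e^{sW_k}]\le e^{\alpha(s)}$, but then you must prove $\sup_{k\ge 2}\mathbb{E}[e^{sW_k}]<\infty$; since $\Pr{k_L=0}=2^{-k}$ forces $W_k=k$, this already restricts $s<\ln 2$, and uniform control of the law of $W_k$ is itself a nontrivial side lemma. Your Poissonization fallback does not sidestep the issue: the single-bin MGF $\mathbb{E}[e^{sV}]$ with $V=V_C$, $C\sim\mathrm{Poisson}(1)$, still requires per-$k$ control of $\phi_k(s)$. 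Even granting a repaired MGF bound, combining it with $\mathbb{E}[e^{sg(\bC)}]$ and optimizing in $s$ is unlikely to reproduce the exact Cram\'er form $(c-1-\ln c)/4$, which in the paper is simply inherited from the geometric-sum tail.
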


\begin{proof}

  Let $L=\set{\alpha_1,\ldots, \alpha_{n}}$ be a set of infnite random
  binary strings.  We consider the evolution of the set
  $\varphi_{\log n}(L)$ of minimal $\log n$-prefixes as we process the
  strings $\alpha_i$ one by one.  Specifically, let
  $L^{(i)} \triangleq \set{\alpha_1, \ldots, \alpha_{i}}$, for
  $1\leq i\leq n$, and $L_0 = \emptyset$.

\medskip\noindent
Let $\varphi(L^{(i)}) \triangleq \set{s_j \circ \delta_{j}^{(i)} \Bigm\vert s_j \circ \delta_{j}^{(i)} \text{is a prefix of } \alpha_j \text{ and } \size{s_i} = \ceil{\log n}  \text{ for } 1\leq j \leq i}$. Note that 
\begin{center}
 $p_{\log n}(L^{(i)})=\displaystyle{\sum_{j\in [i]} \size{\delta^{(i)}_j}}$ .
\end{center}

We bound $p_{\log n}(L)$ by considering the increase 
$\Delta_i \triangleq p_{\log n}(L^{(i)}) - p_{\log
  n}(L^{(i-1)})$. Since $p_{\log n}(L^{(0)}) = 0$ and
$p_{\log n}(L^{(n)}) = p_{\log n}(L)$, then
$p_{\log n}(L) = \sum_{i=1}^n \Delta_i$.

The addition of the string $\alpha_i$  has two types of contributions
to $\Delta_i$. The first contribution is $\delta_{i}^{(i)}$.
The second contribution is due to the need to extend colliding
strings.  Indeed, since the set $L^{(i-1)}$ is prefix-free,
there exists at most one $j<i$ such that $s_j \circ \delta^{(i-1)}_j$
is a prefix of $\alpha_i$.
If $s_j \circ \delta^{(i-1)}_j$
is a prefix of $\alpha_i$, then 
$\Delta_i = \size{\delta^{(i)}_j} - \size{\delta^{(i-1)}_j} +
\size{\delta^{(i)}_i}$. Because $\delta^{(i)}_j$ and $\delta^{(i)}_i$
are minimal subject to being prefix-free, we also have that
$\size{\delta^{(i)}_j} = \size{\delta^{(i)}_i}$. Hence,
$\Delta_i \leq 2\cdot \size{\delta^{(i)}_i}$. This implies that, for
every $\tau$:
\begin{align*}
\Pr{\Delta_i \geq 2\tau} &\leq\Pr{\size{\delta^{(i)}_i}\geq\tau}\;.
\end{align*}

\medskip\noindent
We now proceed to bound $\Pr{\size{\delta^{(i)}_i}\geq\tau}$. 
Fix $i\geq 1$ and let $\delta_i(\ell)$ denote the prefix
of length $\ell$ of $\delta^{(i)}_i$. We denote by $n_{\ell}$  the number of leaves in
the subtree rooted at $s_i \circ\gamma_i(\ell)$  in the trie $T(L^{(i-1)})$ (i.e., right before the string $\alpha_i$ is processed).  Formally,
\begin{align*}
n_\ell&\triangleq \size{\set{j<i \Bigm\vert s_i \circ \delta_i(\ell) \text{ is a prefix of } 
		s_j \circ \delta^{(i-1)}_j}}\;.
\end{align*}

Clearly,
$n_0 = \size{\set{j<i \bigm\vert s_i = s_j }}$ and
$n_{\size{\gamma^{(i)}_i}}=0$. We bound $\size{\gamma^{(i)}_i}$ by bounding
the minimum $\ell$ for which $n_\ell$ becomes zero as follows: define the binary random variable $Z_{\ell+1}$ to be $1$ if
$n_{\ell+1} \leq \frac{1}{2}\cdot n_\ell$, and $0$ otherwise. Note that $\Pr{Z_\ell=1}\geq 1/2$ and that $\set{Z_\ell}_\ell$ are
independent.
By
definition,
\begin{align}\label{eq:sum}
\size{\delta^{(i)}_i} \geq \tau 
~\Longrightarrow~
\sum_{s=1}^{\tau} Z_s \leq \log (1+n_0)\;.
\end{align}

\medskip\noindent
By the law of total probability,
\begin{align}\label{eq:alpha}
\Pr{\size{\delta^{(i)}_i} \geq \tau} 
&\leq
\Pr{\log(1+n_0) \geq \tau/8}+\Pr{\parentheses{\size{\delta^{(i)}_i} \geq \tau }\Bigm\vert \log (1+n_0) \leq \tau/8}\;.
\end{align}

\medskip\noindent
We now bound the two terms in the RHS of Eq.~\ref{eq:alpha}.  Note
that $\expectation{}{n_0} \leq 1$.  In fact
$\expectation{}{n_0 \bigm\vert \bigwedge_{j<i} \Delta_j=\xi_j}\leq 1$
for every realization $\set{\xi_j}_{j<i}$ of $\set{\Delta_j}_{j<i}$.
By Markov's inequality:
\begin{align}\label{eq:alpha1}
\Pr{1+n_0 \geq 2^{\tau/8}} &\leq \frac{1+\expectation{}{n_0}}{2^{\tau/8}}
\leq {2^{-\tau/8+1}}\;.
\end{align}
To bound the second term in the  RHS of Eq.~\ref{eq:alpha}, we apply the Chernoff bound in Eq.~\ref{eq:ch5}:
\begin{align}\label{eq:alpha2}
\nonumber \Pr{\size{\delta^{(i+1)}_i} \geq \tau \Bigm\vert \log (1+n_0) \leq \tau/8} & \leq
\Pr{\sum_{s=1}^{\tau} Z_s \leq \frac{\tau}{8}} \hspace{3cm} (\text{By Eq.~\ref{eq:sum}})\\
\nonumber&\leq \Pr{\sum_{s=1}^{\tau} Z_s \leq \frac{2}{8}\cdot\expectation{}{\sum_{s=1}^{\tau} Z_s } } \hspace{0.72cm} (\expectation{}{Z_i}\geq 1/2)\\
\nonumber&\leq \exp\parentheses{-\frac{1}{2}\cdot \expectation{}{\sum_{s=1}^{\tau} Z_s } \cdot \parentheses{1-\frac{2}{8}}^2}\\
&\leq \exp\parentheses{-\frac{\tau}{4}\cdot \parentheses{\frac{3}{4}}^2}
=\exp\parentheses{-\frac{9}{64}\cdot\tau}\;.
\end{align}
\noindent
From Equations~\ref{eq:alpha} -- \ref{eq:alpha2}, it follows that:
\begin{align*}
\Pr{\size{\delta^{(i+1)}_i} \geq \tau} &\leq  2^{-\tau/8+1} + \exp(-9\tau/64)\leq 2^{-\tau/8+2}\;.
\end{align*}
Therefore,
\begin{align}\label{eq:delta}
\Pr{\Delta_i \geq  16\cdot (\tau+2))}
&\leq 2^{-\tau}\;.
\end{align}
Note that Eq.~\ref{eq:delta}  also holds under every conditioning on
the realizations of $\set{\Delta_{j}}_{j<i}$. 

Let $\Delta'_i\triangleq \frac{1}{16}\cdot \Delta_i - 1$ and note that
$\Pr{\Delta'_i\geq \tau}\leq 2^{-\tau+1}$.  Let $\{G_i\}_i$
denote independent geometric random variables, where
$G_i\sim Ge(1/2)$.  Since $\Pr{G_i\geq \tau}=2^{-(\tau-1)}$, we
conclude that $\Delta'_i$ is stochastically dominated by $G_i$.  In
fact, the random variables $\set{\Delta'_i}_{i\in[f]}$ are
unconditionally sequentially dominated by $\set{G_i}_{i\in[n]}$.
By~\cite[Lemma 8.8]{doerr2018tools}, it follows that
$\sum_{i\in[n]} \Delta'_i$ is stochastically dominated by $\sum_{i\in [n]} G_i$.\footnote{Note that RVs $\set{\Delta}_i$ are not 
independent and probably not even negatively associated. Hence, standard concentration bounds do not apply to $\sum \Delta_i$.}

\medskip\noindent
The sum of independent geometric random variables is
concentrated~\cite{janson2018tail} and so we get:
\begin{align*}
\Pr{\sum_{i\in[n]} \Delta'_i \geq c\cdot n/2} &\leq \Pr{\sum_{i\in[n]} G_i \geq c \cdot n/2}\\
&\leq \exp\parentheses{-\frac{c-1-\ln c}{4}\cdot n}
\end{align*}
as required.
\end{proof}

\section{Lower Bound for \sbs}\label{sec:lbsbs}

This section deals with proving the following lower bound on the rate
$R_f$.  By Obs.~\ref{obs:reduction}, this also implies a lower bound
on the rate $R_{b^2}$.

\begin{lemma}\label{lem:lowersbs}
There exists a constant $C>0$ such that, for all $c>C$, we have that $R_{f}(cn) = \Omega(\sqrt{n}\log n)$, for all sufficiently large $n$.
\end{lemma}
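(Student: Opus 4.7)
The plan is to decompose $f(\bB)=\sum_j B_j^2$ dyadically by bucket occupancy and run a weighted union bound, using the single-bucket Chernoff bound of Eq.~\ref{eq:ch4} (carrying a $\tau\log\tau$ exponent) as the main quantitative tool. Let $N_\tau\triangleq|\{j:B_j\geq\tau\}|$. Every bucket with $B_j\in[2^\ell,2^{\ell+1})$ contributes at most $4^{\ell+1}$ to $f$, so
\[
f(\bB)\;\leq\;n+4\sum_{\ell\geq 1}4^\ell N_{2^\ell}.
\]

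I fix a large constant $c_1$, let $L\triangleq\lceil\tfrac12\log_2 n+\log_2 c_1\rceil$, and split the event $\{f(\bB)\geq \mu_f+cn\}$ into two regimes. In the \emph{heavy} regime some bucket has occupancy $\geq 2^L\geq c_1\sqrt n$; the single-bucket tail $\Pr{B_j\geq\tau}\leq(e/\tau)^\tau$ (Eq.~\ref{eq:ch4}) and a union bound over $n$ buckets give probability at most $n(e/(c_1\sqrt n))^{c_1\sqrt n}=\exp(-\Omega(\sqrt n\log n))$. This is precisely where the $\tau\log\tau$ factor is needed: a Chernoff bound with only a constant in place of $\log\delta$ would give only an $\Omega(\sqrt n)$ exponent here.

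In the \emph{light} regime every $B_j<2^L$ and only levels $1\leq\ell\leq L-1$ appear in the dyadic sum; if $f(\bB)\geq\mu_f+cn\geq (c+2)n-1$, then $\sum_{\ell=1}^{L-1}4^\ell N_{2^\ell}\geq\Omega(cn)$. I pick weights $\beta_\ell>0$ with $\sum_\ell\beta_\ell\leq\tfrac14$ and invoke a weighted pigeonhole: some $\ell\geq 1$ satisfies $N_{2^\ell}\geq m_\ell\triangleq \Omega(\beta_\ell cn/4^\ell)$. Since the multinomial occupancies are negatively associated, the indicators of $\{B_j\geq\tau\}$ are NA and
\[
\Pr{N_\tau\geq m}\;\leq\;\binom{n}{m}(\Pr{B_1\geq\tau})^m\;\leq\;(en/m)^m(e/\tau)^{m\tau},
\]
whose negative logarithm is at least $m\tau\ln\tau-O(m\tau+m\ln(n/m))$. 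Choosing $\beta_\ell\propto 2^\ell/\ell$ (normalized via $\sum_{\ell\leq L}2^\ell/\ell=\Theta(\sqrt n/\log n)$) makes $m_\ell\cdot 2^\ell\cdot\ell\ln 2=\Omega(c\sqrt n\log n)$ uniformly across levels, for $c$ above an absolute constant; a union bound over the $O(\log n)$ levels is absorbed into the rate.

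The main obstacle is this calibration of the weights. A \emph{uniform} pigeonhole yields per-level rate only $\Omega(\sqrt n)$, losing the needed $\log n$ factor. The weights $\beta_\ell\propto 2^\ell/\ell$ match the geometric growth of $4^\ell$ in the decomposition against the $\tau\log\tau$ growth of the single-bucket Chernoff exponent, equalizing rates across levels. For the smallest constant levels, where $\tau\log\tau=O(1)$ and the $(e/\tau)^\tau$ bound is weak, I handle the contribution separately via standard concentration on $N_\tau$ (which has mean $\Theta(n)$ and sub-Gaussian deviations), giving an $e^{-\Omega(n)}$ bound that is absorbed into $\mu_f$. Replacing the $\log\delta$ factor in Eq.~\ref{eq:ch4} by a constant would collapse the rate to $\Omega(\sqrt n)$, which is why this sharper Chernoff form is critical throughout.
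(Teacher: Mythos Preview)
Your high-level strategy---decompose $f(\bB)$ by occupancy level, bound each level's counting variable via Chernoff, and single out the $\delta\log\delta$ form as the key ingredient---matches the paper's. The paper also works with the same counting variables (it writes $|\calS_i|=N_i$ and uses the identity $f(\bB)=2\sum_i i|\calS_i|-n$ rather than a dyadic sum), and it too isolates the regime of occupancies near $\sqrt n$ as the place where Eq.~\ref{eq:ch4} is essential.

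However, your weighted-pigeonhole argument has a genuine gap in the \emph{low-to-moderate} dyadic levels. With $\beta_\ell\propto 2^\ell/\ell$ normalized by $\sum_{\ell\le L}2^\ell/\ell=\Theta(\sqrt n/\log n)$, the resulting threshold is $m_\ell=\Theta\bigl(c\sqrt n\log n/(\ell\,2^\ell)\bigr)$. Compare this to $\mathbb E[N_{2^\ell}]\le nE_{2^\ell}=n(e/2^\ell)^{2^\ell}$: one has $m_\ell\le nE_{2^\ell}$ precisely when $2^\ell(\ell\ln 2-1)\lesssim \tfrac12\ln n$, i.e.\ for all $\ell$ up to $\Theta(\log\log n)$. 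In that entire range your pigeonhole conclusion $\{N_{2^\ell}\ge m_\ell\}$ asks $N_{2^\ell}$ to exceed a value \emph{below} its mean, so $\Pr{N_{2^\ell}\ge m_\ell}\ge 1/2$ and the union bound is vacuous. Your proposed fix---treating ``the smallest constant levels'' via sub-Gaussian concentration of $N_\tau$ around its $\Theta(n)$ mean---only removes a bounded number of levels, whereas the problematic range grows like $\log\log n$. Equivalently, in your displayed rate estimate $m\tau\ln\tau-O(m\tau+m\ln(n/m))$, for any fixed $\ell\ge 3$ the term $m\ln(n/m)=\Theta(m\ln n)$ eventually dominates $m\tau\ln\tau=\Theta(m)$ as $n\to\infty$.

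The paper avoids exactly this pitfall by \emph{not} pigeonholing: it bounds each $|\calS_i|$ against a level-dependent threshold chosen so that the implied deviation is always a large enough multiple of the mean. For $i\le\tau_1$ (where $\tau_1=\max\{i:E_i\ge c\log n/\sqrt n\}=\Theta(\log n/\log\log n)$, i.e.\ your $\ell\lesssim\log\log n$) it uses a \emph{constant}-$\delta$ multiplicative Chernoff bound, extracting the rate from the large mean $nE_i\ge c\sqrt n\log n$; for $\tau_1<i\le\tau_2=n^{1/4}/\sqrt{\log n}$ it sets $\delta_i=c\log n/(E_i\sqrt n)>1$ so that $\delta_i\cdot nE_i=c\sqrt n\log n$ directly; and only for $\tau_2<i\le\sqrt n$ does it invoke Eq.~\ref{eq:ch4}. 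To salvage your dyadic scheme you would need an analogous split at $\ell\approx\log\log n$ (not at a fixed constant), switching to constant-factor deviations of $N_{2^\ell}$ below that point.
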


\subsection{Preliminaries}

Given an input $\bX$ of $n$ keys and its associated occupancy vector
$\bB(\bX) = (B_0, B_1, \ldots, B_{n-1})$, define
$\calS_i \triangleq \set{j \in [n]\mid B_j \geq i}$ to be the set of
buckets with at least $i$ keys assigned to them.  Note that the random variables $\set{\size{\calS_i}}_i$
are negatively associated because they are monotone functions of bin occupancies, which are a classical
example of negatively associated RVs~\cite{dubhashi2009concentration}.

\begin{claim2}
\label{claim:double count}
  For every occupancy vector $(B_0,B_1, \ldots, B_{n-1})$, the following holds:
\begin{align}\label{eq:s1} 
  \sum_{j\in[n]} \binom{\size{B_j}+1}{2} &= \sum_{i\in[n+1]} i \cdot \size{\calS_i}\;.
  \end{align}
\end{claim2}
\begin{proof}
  Consider an $n\times n$ matrix $A$ filled according to the following rule: 
  \begin{align*}
    A_{i,j} &\triangleq
              \begin{cases}
                i &\text{if $B_j\geq i$}\\
                0 &\text{otherwise.}
              \end{cases}
  \end{align*}
  Let $S\triangleq \sum_{i,j} A_{i,j}$.  The sum of entries in column
  $j$ equals $\binom{\size{B_j}+1}{2}$.   On the other hand, the
  sum of entries in row $i$ equals $i\cdot \size{\calS_i}$. Hence
  both sides of Eq.~\ref{eq:s1} equal $S$, and the claim follows.
\end{proof}

\medskip\noindent
Lemma~\ref{lem:isi} states that, in
order to prove Lemma~\ref{lem:lowersbs}, it suffices to prove a
lower bound on the upper tail probability of the random
variable $\sum_{i\in[n+1]} i\cdot \size{\calS_i}$. Specifically, , we
get that:

\begin{lemma}\label{lem:isi} For every $c$, we have that
\begin{align*}
\Pr{f(\bB(X)) \geq \mu_f + cn} &= \Pr{\sum_{i\in[n+1]} i\size{\calS_i} \geq  \frac{(3+c)n-1}{2}}\;.
\end{align*}
\end{lemma}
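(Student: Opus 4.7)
The plan is to derive the identity by combining Claim~\ref{claim:double count} with the elementary identities $\binom{B_j+1}{2} = \tfrac{1}{2}(B_j^2 + B_j)$ and $\sum_{j\in[n]} B_j = n$, together with the known value $\mu_f = 2n-1$. The statement to be proved is an \emph{equality} of events (up to the linear transformation between the two random variables), not a probabilistic inequality, so the entire proof reduces to showing that $f(\bB(X))$ and $\sum_{i\in[n+1]} i\size{\calS_i}$ are related by an affine transformation and then solving for the correct threshold.

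First I would rewrite the left-hand side of Eq.~\ref{eq:s1}. Since $\binom{B_j+1}{2} = \tfrac{1}{2}(B_j^2+B_j)$, summing over $j\in[n]$ gives
\begin{align*}
\sum_{j\in[n]} \binom{B_j+1}{2} &= \frac{1}{2}\sum_{j\in[n]} B_j^2 + \frac{1}{2}\sum_{j\in[n]} B_j = \frac{f(\bB)+n}{2},
\end{align*}
using the fact that the total number of keys is $n$. Combining with Claim~\ref{claim:double count}, this yields the key identity
\begin{align*}
\sum_{i\in[n+1]} i\cdot \size{\calS_i} &= \frac{f(\bB)+n}{2},
\qquad \text{equivalently} \qquad
f(\bB) = 2\sum_{i\in[n+1]} i\cdot \size{\calS_i} - n.
\end{align*}

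Next I would substitute this expression into the event $\{f(\bB(X)) \geq \mu_f + cn\}$. With $\mu_f = 2n-1$, the inequality $f(\bB(X)) \geq (2+c)n - 1$ becomes
\begin{align*}
2\sum_{i\in[n+1]} i\cdot \size{\calS_i} - n \;\geq\; (2+c)n - 1
\quad\Longleftrightarrow\quad
\sum_{i\in[n+1]} i\cdot \size{\calS_i} \;\geq\; \frac{(3+c)n - 1}{2}.
\end{align*}
Since this is a deterministic equivalence of the two events (they coincide pointwise on every realization of $\bX$), the two probabilities are equal, which is exactly the claim of the lemma.

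There is essentially no hard step here: the proof is a two-line algebraic manipulation once Claim~\ref{claim:double count} and the value of $\mu_f$ are in hand. The only mild care required is to track the additive constants ($-n$ from the linear term $\sum_j B_j$, and $-1$ from $\mu_f = 2n-1$) so that the threshold on the right-hand side comes out to exactly $\frac{(3+c)n - 1}{2}$ rather than something off by a small factor.
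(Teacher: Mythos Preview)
Your proposal is correct and follows exactly the paper's approach: the paper's proof simply states that Claim~\ref{claim:double count} gives $f(\bB(X)) = 2\sum_{i\in[n+1]} i\size{\calS_i} - n$ and then invokes $\mu_f = 2n-1$. Your write-up fills in the intermediate algebra (the expansion of $\binom{B_j+1}{2}$ and the use of $\sum_j B_j = n$) that the paper leaves implicit, but the argument is identical.
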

\begin{proof}
  By Claim~\ref{claim:double count},
  $ f(\bB(X)) = 2\cdot\sum_{i\in[n+1]} i \size{\calS_i}-n$. The Lemma
  follows from the fact that
  $\mu_f =
  2n-1$~\cite{cormen2009introduction,mitzenmacher2017probability}.
\end{proof}

\medskip\noindent
Next, we upper bound $\expectation{}{ \size{\calS_i}}$. Let
$E_i \triangleq \parentheses{\frac{e}{i}}^{i}$ and note the following:
 
\begin{claim2}\label{expectation-bucket} For
every $i\in \set{1,\ldots, n}$, we have that $\expectation{}{ \size{\calS_i}} \leq n \cdot E_i$.
\end{claim2} 
\begin{proof} Fix $i$ and let $X_{i,j}$ be the indicator random
variable that is $1$ if $B_j \geq i$ and $0$ otherwise. We get that
$\size{\calS_i} = \sum_j X_{i,j}$. Because each key chooses a bucket
independently and uniformly at random, we have that:
\begin{align*}
  \Pr{B_{j} \geq i} &
                      \leq {n \choose i}\cdot \parentheses{\frac{1}{n}}^{i}
                      \leq \parentheses{\frac{en}{i}}^{i}\cdot \parentheses{\frac{1}{n}}^{i} =\parentheses{\frac{e}{i}}^{i} = E_i\;.
\end{align*}
\medskip\noindent
The claim follows by linearity of expectation.
\end{proof}

\medskip\noindent
One can analytically show that:
\begin{observation}\label{obs:iei}
  $\sum_{i=1}^\infty i\cdot E_i \leq 10$.
\end{observation}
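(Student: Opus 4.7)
My plan is to prove the observation by splitting the sum at a small finite threshold, computing the initial terms explicitly, and bounding the tail by a geometric series.

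First I would analyze the ratio of consecutive summands. Writing $a_i \triangleq i \cdot E_i = i\cdot(e/i)^i$, a direct computation gives
\[
\frac{a_{i+1}}{a_i} \;=\; \frac{(i+1)(e/(i+1))^{i+1}}{i\,(e/i)^{i}} \;=\; \frac{e}{i}\left(\frac{i}{i+1}\right)^{i}.
\]
Using the elementary bound $(i/(i+1))^i = (1-1/(i+1))^i \leq e^{-i/(i+1)}$ (from $\ln(1-x) \leq -x$), this is at most $e^{1/(i+1)}/i$. In particular, for $i \geq 6$ this ratio is below $1/2$, and the ratio is decreasing fast enough that the tail past any fixed index behaves like a convergent geometric series.

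Next I would explicitly upper bound the leading terms $a_1,\ldots,a_6$ using a rigorous rational upper bound on $e$ (for instance, $e < 2.72$). This gives concrete estimates such as $a_1 = e$, $a_2 = e^2/2$, $a_3 = e^3/9$, $a_4 = e^4/64$, $a_5 = e^5/625$, $a_6 = e^6/7776$, whose numerical values sum to less than $9.8$. The tail $\sum_{i \geq 7} a_i$ is then bounded, using the ratio bound above, by a geometric series with first term $a_7$ and ratio at most $1/2$, i.e., by $2 a_7$, which is on the order of $0.02$. Adding the explicit partial sum and the geometric tail estimate yields a total strictly below $10$.

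The main obstacle is simply one of slack: the first six terms already sum to nearly $10$, so the bookkeeping must be careful, both in choosing how many initial terms to expose explicitly and in picking rational upper bounds on powers of $e$ that are tight enough. Choosing the split point to be $i_0 = 6$ (so that the ratio bound gives $\leq 1/2$) makes the tail negligible and leaves a comfortable margin for the explicit computation of $a_1,\ldots,a_6$.
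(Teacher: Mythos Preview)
Your argument is correct. The paper itself gives no proof of this observation; it simply prefaces the statement with ``One can analytically show that,'' so your explicit split-and-bound computation is exactly the kind of verification the paper leaves to the reader. One tiny numerical caution: with the rational bound $e<2.72$ the six leading terms sum to just a hair over $9.8$ rather than strictly under it, so either quote the bound as ``at most $9.81$'' or use a slightly sharper rational cap on $e$ (e.g., $e<2.7183$); either way the tail estimate $2a_7<0.02$ still leaves the total comfortably below $10$.
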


\subsection{Applying Chernoff Bounds in Different Regimes}
In the proof of Lemma~\ref{lem:lowersbs}, we consider three thresholds
on bin occupancies 
$\tau_1 \leq \tau_2 \leq\tau_3$ defined as follows:
\begin{align*}
  \tau_1\triangleq \max\set{i \Big\vert E_i\geq \frac{c\log n}{\sqrt{n}}}\;,
  \hspace{1.5cm} \tau_2 \triangleq \frac{n^{1/4}}{\sqrt{\log n}} \;, 
  \hspace{1.5cm}\tau_3 \triangleq \sqrt{n}\;.
\end{align*}

\begin{claim2}\label{claim:sum1} For every $c>0$, there exists a $\gamma=\gamma(c)>0$, such that:
\begin{align*}
  \Pr{\sum_{i\leq \tau_1} i\cdot \size{\calS_i} \geq cn+ \sum_{i\leq \tau_1} i \cdot E_i\cdot n} &\leq \exp\parentheses{-\gamma\sqrt{n}\log n}\;.
\end{align*}
\end{claim2}
\begin{proof}
  Fix $i\leq \tau_1$. By the Chernoff bounds
  (Eq.~\ref{eq:ch2}-\ref{eq:ch3} in Appendix~\ref{app:chernoff}) and
  the definition of $\tau_1$, for every $\delta>0$, there exists a
   $c'= c'(\delta)>0$, such that:
\begin{align*}
\Pr{\size{\calS_i} \geq (1+\delta)E_i\cdot n} &\leq \exp\parentheses{- c' \cdot E_i\cdot n}\leq \exp(-c'\cdot c\cdot \sqrt{n}\cdot \log n)\;.
\end{align*}
\medskip\noindent
By applying a union bound over all $i\leq \tau_1$, it follows that there
exists a $\gamma>0$ such that:
\begin{align*}
\Pr{\sum_{i\leq \tau_1} i\size{\calS_i} \geq (1+\delta)\cdot \sum_{i\leq \tau_1} i E_i\cdot n } &\leq \exp\parentheses{-\gamma\sqrt{n}\log n}\;.
\end{align*}
\medskip\noindent
Define $\delta \triangleq c/10$. By Obs.~\ref{obs:iei},
$\delta \sum_{i} iE_i \leq c$, and the claim follows.
\end{proof}

\begin{claim2}\label{claim:sum2} For every $c>0$, there exists a  $\gamma=\gamma(c)>0$ such that for $n$ sufficiently large:
\begin{align*}
\Pr{\sum_{i=\ceil{\tau_1}}^{\floor{\tau_2}} i\size{\calS_i} \geq cn + \sum_{i=\ceil{\tau_1}}^{\floor{\tau_2}} i E_i\cdot n} &\leq \exp\parentheses{-\gamma\sqrt{n}\log n} \;.
\end{align*}
\end{claim2}
\begin{proof}
  For every $\tau_1 (c)< i\leq \tau_2$,  define
  $\delta_i \triangleq (c\log n)/(E_i\sqrt{n})$ so that
\begin{align}\label{eq:c}
\sum_{i\leq \tau_2} \delta_i\cdot i  E_i &= \sum_{i\leq \tau_2} i \cdot \frac{c\log n}{\sqrt{n}}  
\leq (\tau_2)^2 \cdot \frac{c\log n}{\sqrt{n}} =c \;.
\end{align}
\medskip\noindent
Since $\delta_i >1$ for every $i>\tau_1$,  by the Chernoff bound in Eq.~\ref{eq:ch3}:
\begin{align*}
\Pr{\size{\calS_i} > (1+\delta_i) \cdot E_i \cdot n} &\leq \exp\parentheses{-\delta_i \cdot n\cdot E_i/3}
  \\
                                                     &= \exp\parentheses{-c/3 \cdot\sqrt{n}\log n}\;. 
\end{align*}
By applying a union bound over all $\tau_1 \leq i \leq \tau_2$, it follows
that there exists a constant $\gamma>0$ such that:
\begin{align} \label{eq:tau2}
\Pr{\sum_{i=\ceil{\tau_1}}^{\floor{\tau_2}} i\size{\calS_i} \geq  \sum_{i=\ceil{\tau_1}}^{\floor{\tau_2}} (1+\delta_i)\cdot iE_i\cdot n}&\leq \exp\parentheses{-\delta\sqrt{n}\log n}\;.
\end{align}
The claim follows by Eq.~\ref{eq:c} and~\ref{eq:tau2}.
\end{proof}

\begin{claim2}\label{claim:sum3}  For every $c>0$, there exists a $\gamma=\gamma(c)>0$ such that for sufficiently large $n$, we have that: 
\begin{align*}
\Pr{\sum_{i=\ceil{\tau_2}}^{\floor{\tau_3}} i\size{\calS_i} > cn + \sum_{i=\ceil{\tau_2}}^{\floor{\tau_3}} i E_i\cdot n  } &\leq \exp\parentheses{-\gamma\sqrt{n}\log(n)}\;.
\end{align*}
\end{claim2}
\begin{proof}
For every $\tau_2 \leq i \leq \tau_3$,  define $\delta_i \triangleq \frac{c}{5} \cdot \frac{\log n}{i\log i \cdot E_i\cdot \sqrt{n}}$ so that the following holds for sufficiently large $n$:
\begin{align}
  \sum_{i=\ceil{\tau_2}}^{\floor{\tau_3}} \delta_i\cdot i E_i &
                                                                =\frac{c}{5} \cdot \parentheses{\sum_{i=\ceil{\tau_2}}^{\floor{\tau_3}} \frac{1}{\log i}} \cdot \frac{\log n}{\sqrt{n}}\\
&\leq \frac{c}{5}\cdot \frac{\tau_3}{\log \tau_2} \cdot \frac{\log n}{\sqrt{n}} 
\leq  \frac{c}{5} \cdot \frac{\log n}{0.25 \log n - 0.5\log \log n} \leq c \;.\label{eq:c2}
\end{align}

\medskip\noindent For a sufficiently large $n$, it holds that
$\delta_i >1$; moreover $\log \delta_i \geq \Omega(i \log i)$ for every
$\tau_2 \leq i \leq \tau_3$. By the Chernoff bound
in Eq.~\ref{eq:ch4}:
\begin{align*}
  \Pr{\size{\calS_i} > (1+\delta_i)\cdot n\cdot E_i}
  &\leq  \exp\parentheses{{-\delta_i\ln(\delta_i)\cdot n \cdot E_i/2}}
    \leq \exp\parentheses{-\Omega(\sqrt{n} \log n)}\;.
\end{align*}

\medskip\noindent
By applying a union bound over all $\tau_2 \leq i \leq \tau_3$, it follows that there exists a $\delta(c)>0$ such that:
\begin{align}\label{eq:tau3}
\Pr{\sum_{i=\ceil{\tau_2}}^{\floor{\tau_3}} i\size{\calS_i} > \sum_{i=\ceil{\tau_2}}^{\floor{\tau_3}} (1+\delta_i)\cdot i\cdot E_i\cdot n}&\leq \exp\parentheses{-\delta\sqrt{n}\log n}\;.
\end{align}
The claim follows by Eq.~\ref{eq:c2} and~\ref{eq:tau3}.
\end{proof}

\begin{claim2} \label{claim:sum4} For every $c>0$, there exists a
   $\gamma=\gamma(c)>0$ such that for sufficiently large $n$,
  we have that:
\begin{align*}
\Pr{\sum_{i=\ceil{\tau_3}}^{n} i\size{\calS_i} > cn} &\leq \exp\parentheses{-\gamma\sqrt{n}\log n} \;.
\end{align*}
\end{claim2}
\begin{proof}
We apply Markov's inequality and get that there exists a $c'>0$ such that:
\begin{align*}
  \Pr{\size{\calS_i} \geq \frac{c}{i}} &\leq \frac{i\cdot E_i\cdot n}{c}\leq \exp\parentheses{-c'\sqrt{n}\log n}\;,
\end{align*}
where the last inequality holds because $i\geq \tau_3$.
The claim follows by applying a union bound over $i\geq \tau_3$.
\end{proof}

\subsection{Proof of Lemma~\ref{lem:lowersbs}}
In order to prove Lemma~\ref{lem:lowersbs}, we apply a union bound over the Claims~\ref{claim:sum1}--\ref{claim:sum4} and use Obs.~\ref{obs:iei}. It follows that for every $c\geq 0$, there
exists a $\gamma=\gamma(c)>0$ such that:
\begin{align*}
\Pr{\sum_{i=1}^n i\size{\calS_i} > (10+c)\cdot n } &\leq\exp\parentheses{-\gamma\sqrt{n}\log n}\;.
\end{align*}
Lemma~\ref{lem:lowersbs} then follows from  Lemma~\ref{lem:isi}.

\section{Proof of Theorems~\ref{thm:k2},~\ref{thm:klogk} and~\ref{thm:external}}\label{sec:final}

To prove Theorems \ref{thm:k2} and~\ref{thm:klogk}, we employ Obs.~\ref{obs:reduction} that
shows a reduction from $R_f$ (and $R_g$, respectively) to $R_{b^2}$ (and $R_{b\log b}$ respectively).
The lower bound for $R_f$ is discussed in Lemma~\ref{lem:lowersbs}. The lower bound for $R_g$ follows
from Lemma~\ref{eq:g vs p} and Theorem~\ref{thm:excess}. The lower bound for $R_0$ follows from Cor.~\ref{cor:p0 vs p} and Theorem~\ref{thm:excess}. Finally, we apply Lemma~\ref{lem:ub} to get matching upper bounds on $R_f,R_g$ and $R_0$.

\section{Discussion: Comparison to Quick Sort}~\label{sec:discussion}
We note that the rate of the running time of Quick Sort is smaller
than that of Bucket Sort~\cite{mcdiarmid1996large}.  Here, we refer to
the version of Quick Sort that picks a pivot $x$ uniformly at random
and then recurses on two subsets: the set of elements smaller
than $x$ and the set of elements greater than $x$. Let $T_{qs}(n)$ be
the number of comparisons that Quick Sort makes on $n$ randomly
permuted distinct keys. The expectation of $T_{qs}(n)$ is denoted by
$\mu_{qs}$ and equals $\Theta(n\log n)$. McDiarmid and
Hayward~\cite{mcdiarmid1996large} prove that for
$\frac{1}{\ln n}< \eps\leq 1$:
\begin{align*}
  \Pr{|T_{qs}(n)-\mu_{qs}| \geq \eps \mu_{qs}} =
  n^{-2\eps(\ln\ln n - \ln(1/\eps)+O(\log\log\log n))}\;.
\end{align*} 
Setting $\eps=c/\ln n$ (for $c>1$) implies that the rate $R_{qs}$ of
$T_{qs}(n)$ satisfies
$R_{qs}(cn) = O(\log \log \log n)$.

One may wonder why the upper tails of Quick Sort and Bucket Sort exhibit
different rates. We provide some intuition by examining the
distributions of occupancies induced by Quick Sort and Bucket Sort on
nodes of depth $\log n$ in a complete binary tree. Consider
occupancies defined by the Quick Sort recursion tree as follows. In
each recursive call, the pivot ``stays'' in the inner node, and the
two lists are sent to the left and right children. Hence, every node
is assigned a (possibly empty) list of keys.  We refer to the
distribution of occupancies across the $n$ nodes of depth $\log n$
as the \emph{Quick Sort distribution}.

The number of comparisons $T_{qs}(n)$ is bounded by $n\log n$ (a bound
on the number of comparisons until level $\log n$) plus the
comparisons starting from level $\log n$. Clearly, the number of
comparisons starting from level $\log n$ depends on the Quick Sort
distribution.

The Quick Sort distribution is very far from the distribution of
$\bB(X)$ when $X\sim\mathcal{X}_n$ (i.e., the occupancy vector in
Bucket Sort when the $n$ keys are distributed uniformly at random).
Specifically, consider the event $Z$ that the occupancies of the $n/2$
nodes of depth $\log n$ in the left subtree are all zeros.  In the
Quick Sort distribution, the probability of event $Z$ is at least
$1/n$, e.g., $Z$ occurs if the first pivot is the smallest element.
  In the Bucket Sort distribution, the probability of event $Z$ is
$2^{-n}$ (i.e., all the keys are in the interval $(1/2,1)$).

\section*{Acknowledgments}
We thank Seth Pettie for useful discussions.

\bibliography{main}
\appendix

\section{Upper Bounds}\label{app:ub}

\begin{lemma}\label{lem:ub}
For every constant $c>0$, the following hold:
\begin{enumerate}
\item $R_{f}(cn)= O(\sqrt{n}\log n)$,
\item $R_{g}(cn) = O(n)$, and
\item $R_0(cn) = O(n)$.
\end{enumerate} 
\end{lemma}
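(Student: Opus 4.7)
The plan is to prove each of the three upper bounds by exhibiting an event that is sufficiently likely (probability $e^{-O(\sqrt{n}\log n)}$ for Part~1 and $e^{-O(n)}$ for Parts~2 and~3) and which forces the relevant random variable to exceed its expectation by at least $cn$, thereby upper-bounding the rate. For Part~1, I would consider the event that bucket~$0$ receives at least $k\triangleq \ceil{\sqrt{(c+2)n}}$ keys. Whenever this occurs, $f(\bB(\bX)) \geq B_0^2 \geq k^2 \geq (c+2)n > \mu_f + cn$, since $\mu_f = 2n-1$. Using $\binom{n}{k}\geq (n/k)^k$ and $(1-1/n)^{n-k}=\Omega(1)$, the probability of this event is at least $(1/k)^k\cdot\Omega(1) = e^{-k\log k - O(1)} = e^{-\Theta(\sqrt{n}\log n)}$, establishing $R_f(cn) = O(\sqrt{n}\log n)$.

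For Part~2, I would pick a suitably small constant $\alpha = \alpha(c)>0$ and consider the event $A$ that all $n$ keys fall in $[0,\alpha)$. This event has probability $\alpha^n = e^{-\Theta(n)}$. Conditional on $A$, the keys are distributed as $n$ balls into $\alpha n$ bins uniformly at random, so each bin has mean occupancy $1/\alpha$, and by Jensen's inequality applied to the convex function $b\log b$ (with the convention $0\log 0 = 0$), $\expectation{}{g(\bB(\bX))\mid A}\geq \alpha n \cdot (1/\alpha)\log(1/\alpha) = n\log(1/\alpha)$. Taking $\alpha$ small enough (as a function of $c$ and the hidden constant in $\mu_g = \Theta(n)$) makes this conditional mean exceed $\mu_g + 2cn$. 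A McDiarmid-type bounded-differences argument, viewing $g$ as a function of the $n$ (conditionally i.i.d.)\ key positions with per-key differences $O(\log n)$, then yields $g(\bB(\bX))\geq \mu_g + cn$ with probability at least $1/2$ given $A$, so the overall probability is $e^{-\Theta(n)}$.

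For Part~3, I would consider the event $A_k$ that every string begins with the common prefix $0^k$ for a constant $k$ depending on $c$. This event has probability $2^{-nk}=e^{-\Theta(n)}$. Conditional on $A_k$, every $\beta\in \varphi_0(L)$ extends $0^k$, so writing $\beta = 0^k\cdot\beta'$ gives the identity $p_0(L) = nk + p_0(L')$, where $L'$ is the (almost surely prefix-free) set of suffixes. Applying Observation~\ref{obs:p0} to $L'$ yields $p_0(L) \geq nk + n\log n$ deterministically given $A_k$. Choosing $k$ so that $nk \geq cn + (\mu_0 - n\log n)$ (possible since $\mu_0 - n\log n = \Theta(n)$, so $k$ can be taken to be a sufficiently large constant) forces $p_0(L)\geq \mu_0+cn$ given $A_k$, hence $\Pr{p_0(L)\geq \mu_0+cn}\geq 2^{-nk} = e^{-O(n)}$. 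The main obstacle is Part~2, since the rare event there only lower-bounds the conditional \emph{mean} of $g$, which must be upgraded to a constant-probability tail bound via McDiarmid's inequality on the $n$ conditionally independent key positions; Parts~1 and~3 are cleaner because their conditioning deterministically witnesses a large value of the random variable (one overloaded bucket, or an $nk$-deep forced common prefix atop the recursive trie structure).
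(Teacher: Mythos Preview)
Your proposal is correct, and Part~1 matches the paper's argument essentially verbatim. Parts~2 and~3 take genuinely different routes from the paper, though.

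For Part~2, the paper avoids any concentration argument: it simply reuses the single-bucket event from Part~1 with a different occupancy threshold. Taking $i=\Theta(n/\log n)$ so that $i\log i$ exceeds $\mu_g+cn$, the event $\{B_j=i\}$ still has probability at least $2^{-2-i\log i}=e^{-\Theta(n)}$, and on this event $g(\bB(\bX))\geq B_j\log B_j\geq \mu_g+cn$ deterministically. This is strictly simpler than your range-shrinking event plus Jensen plus McDiarmid; your argument is valid (the per-coordinate Lipschitz bound of $O(\log n)$ gives a deviation probability $\exp(-\Omega(n/\log^2 n))$, more than enough), but the detour through conditional concentration is unnecessary when a deterministic witness of the same probability is available.

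For Part~3, the paper instead considers the event that \emph{two} designated strings share a common prefix of length $(c/2+1)n$, which has probability $2^{-(c/2+1)n}$; those two leaves then sit at depth exceeding $(c/2+1)n$, and Observation~\ref{obs:p0} applied to the remaining $n-2$ strings supplies the $n\log n$ term. Your event (all $n$ strings share a constant-length prefix $0^k$) is different but equally clean: it buys the $n\log n$ term from the full suffix trie rather than from $n-2$ leftover strings, and your identity $p_0(L)=nk+p_0(L')$ makes the arithmetic slightly tidier. Both approaches land at $e^{-\Theta(n)}$ with comparable effort.
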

\begin{proof}
  To prove statements 1 -- 2, we define $\calA_{i,j}$ to be the event
  that $B_j = i$ (occupancy of bin $j$ equals $i$) and note that:
\begin{align*}
\Pr{\calA_{i,j}}&= {n \choose i}\cdot \parentheses{\frac{1}{n}}^i\cdot \parentheses{1-\frac{1}{n}}^{n-i}
        \geq \parentheses{\frac{n}{i}}^i \cdot \parentheses{\frac{1}{n}}^i\cdot \parentheses{1-\frac{1}{n}}^{n}\\
        &\geq \frac{1}{4}\cdot \parentheses{\frac{1}{i}}^i = 2^{-2-i\log i}\;.
\end{align*}
\medskip\noindent
Fix a bin $j$. It follows that:
\begin{align*}
\Pr{f(\bB(\bX))\geq cn}&\geq \Pr{\calA_{\sqrt{cn},j}} \geq \exp\parentheses{-\Omega(\sqrt{n}\log n)}\;.
\end{align*}
This implies that $R_{f}(cn)= O(\sqrt{n}\log n)$ for every constant $c>0$.

\medskip\noindent
For $i$ such that $i\log i =\Omega(n)$ (i.e., $i=\Theta(n/\log n)$), we have that:
\begin{align*}
\Pr{g(\bB(\bX))=\Omega(n)}&\geq \Pr{\calA_{i,j}} \geq  \exp\parentheses{-\Omega(n)}\;.
\end{align*}
This implies that $R_{g}(cn)=O(n)$ for every constant $c>0$.

\medskip\noindent Now we consider lower bounding
$\Pr{p_0(L) \geq \mu_0 + cn}$ for every $c>0$ (statement 3). Consider
the event $\mathcal{A}$ in which the set $L$ contains two binary
strings $\alpha_1$ and $\alpha_2$ that share a common prefix of length
$(\frac{c}{2}+1)\cdot n$. When $L\sim\mathcal{L}_n$, we have that
$\Pr{\mathcal{A}} \geq 2^{-(c/2+1)n}$.

On the other hand, we have that, if event $\mathcal{A}$ happens, then
the depth of nodes $\pi(\alpha_1)$ and $\pi(\alpha_2)$ is more than
$(c/2+2)\cdot n$ in the trie $T(\varphi_0(L))$ (i.e., at least $c/2+2$ bits are required
to separate $\alpha_1$ and $\alpha_2$). For the rest $n-2$ binary
strings, we use Obs.~\ref{obs:p0} and get that we needed at least
$(n-2)\log(n-2)$ bits to separate them. Since
$\mu_0 \leq n\log n + 2n$, we get that event $\mathcal{A}$ implies
that
\begin{align*}
p_0(L\mid \mathcal{A})\geq 2\cdot \parentheses{\frac{c}{2}+2}\cdot n+(n-2)\log (n-2) \geq n\log n +2n + cn \geq \mu_0 + cn\;.
\end{align*} 
In other words,
$\Pr{p_0(L) \geq \mu_0 + cn}\geq \Pr{A}\geq 2^{-(c/2+1)n}$, hence
$R_0(cn) = O(n)$.
\end{proof}

\section{Proof of Observation~\ref{obs:reduction}}\label{app:reduction}
\begin{proof}
  Consider the first inequality that we need to prove:
  \begin{align}\label{eq:first}
R_f(\delta_1 n)&\leq R_{b^2}(cn) \;.
  \end{align}
 
Recall that $\mu_f= \expectation{}{f(\bB(X))}=2n-1$, and
that $\mu_{b^2}= \expectation{}{T_{b^2}(X)}$.
Equation~\ref{eq:first} is equivalent to the inequality
(where $X\sim\mathcal{X}_n$).
  \begin{align*}
    \Pr{ T_{b^2}(X)\geq \mu_{b^2}+cn} \leq
    \Pr{ f(\bB(X))\geq \mu_{f}+\delta_1 n}\;.
  \end{align*}
  
  Recall that $T_{b^2}(\bX)=\Theta(n+f(\bB(\bX)))$.
  Let $0<c_1\leq c_2$ be constants such that, for
  sufficiently large values of $n$,
  \begin{align*}
    c_1 \cdot (n + f(\bB(\bX)) ) &\leq T_{b^2}(\bX) \leq
                                   c_2 \cdot (n + f(\bB(\bX)) )\;.
  \end{align*}

\medskip\noindent
  Let $\delta_1\triangleq\frac{c-3(c_2-c_1)}{c_2}$. Note that
  $\delta_1=\Theta(c)$. Then,
  \begin{align*}
    \Pr{T_{b^2}(\bX) \geq \mu_{b^2} + c\cdot n} &\leq
\Pr{c_2 \cdot (n + f(\bB(\bX)))  \geq \mu_{b^2} + c\cdot n} \\
                                                &\leq
\Pr{c_2 \cdot (n + f(\bB(\bX)))  \geq c_1n+c_1\mu_f+cn}\\                       
&\leq \Pr{ f(\bB(\bX)) \geq \mu_f + \frac{c-3(c_2-c_1)}{c_2}\cdot n}\\
&= \Pr{ f(\bB(\bX)) \geq \mu_f + \delta_1 n}\;.
\end{align*}
The second inequality is proved in a similar fashion.
\end{proof}

\section{Variants of Chernoff  Bounds}\label{app:chernoff}

\begin{theorem} Let $X_1, \ldots, X_n$ be independent binary random variables . Let  $X=\sum_{i=1}^{n} X_i$ and $\mu=\expectation{}{X}$. Then the following Chernoff bounds hold:

\medskip\noindent
1. For every $\delta>0$:
\begin{align}\label{eq:ch1}
\Pr{X \geq (1+\delta)\mu}&\leq \parentheses{\frac{e^{\delta}}{(1+\delta)^{(1+\delta)}}}^\mu \;.
\end{align}

\medskip\noindent
2. For $0<\delta\leq 1$, 
\begin{align}\label{eq:ch2}
\Pr{X \geq (1+\delta)\mu}&\leq e^{-\mu\delta^2/3}\;.
\end{align}

\medskip\noindent
3. For $\delta\geq 1$,
\begin{align}\label{eq:ch3}
\Pr{X \geq (1+\delta)\mu}&\leq e^{-\mu\delta/3}\;.
\end{align}

\medskip\noindent
4. For $\delta\geq e$,
\begin{align}\label{eq:ch4}
\Pr{X \geq (1+\delta)\mu}&\leq e^{-\mu \delta\ln(\delta)/2}\;.
\end{align}

\medskip\noindent
5. For $0<\delta\leq 1$,
\begin{align}\label{eq:ch5}
\Pr{X \geq (1-\delta)\mu}&\leq e^{-\mu \delta^2/3}\;.
\end{align}
\end{theorem}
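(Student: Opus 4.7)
The plan is to prove inequality (\ref{eq:ch1}) as the master Chernoff bound via the moment generating function method and then derive the remaining four inequalities by purely analytic manipulations of the exponent. For (\ref{eq:ch1}), for any $t>0$, Markov's inequality applied to the monotone map $x\mapsto e^{tx}$ yields
\begin{align*}
\Pr{X \geq (1+\delta)\mu} \leq \frac{\mathbb{E}[e^{tX}]}{e^{t(1+\delta)\mu}}.
\end{align*}
Independence of the $X_i$ gives $\mathbb{E}[e^{tX}]=\prod_{i=1}^n \mathbb{E}[e^{tX_i}]$, and for a binary $X_i$ with $p_i=\Pr{X_i=1}$ we have $\mathbb{E}[e^{tX_i}]=1+p_i(e^t-1)\leq \exp(p_i(e^t-1))$ by the elementary bound $1+x\leq e^x$. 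Multiplying over $i$ and using $\sum_i p_i=\mu$ gives $\mathbb{E}[e^{tX}]\leq \exp(\mu(e^t-1))$. Setting $t=\ln(1+\delta)>0$ (valid since $\delta>0$) produces the stated expression in (\ref{eq:ch1}).

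Next I would derive (\ref{eq:ch2}), (\ref{eq:ch3}) and (\ref{eq:ch4}) by bounding the logarithm of the master exponent, namely $\psi(\delta)\triangleq (1+\delta)\ln(1+\delta)-\delta$. For (\ref{eq:ch2}) the goal is $\psi(\delta)\geq \delta^2/3$ on $(0,1]$; I would verify this by setting $h(\delta)\triangleq \psi(\delta)-\delta^2/3$, checking $h(0)=h'(0)=0$, and showing $h''(\delta)=\frac{1}{1+\delta}-\frac{2}{3}\geq 0$ on $[0,1/2]$ and handling $[1/2,1]$ by monotonicity, or alternatively just verifying $h'(\delta)=\ln(1+\delta)-2\delta/3\geq 0$ via $\ln(1+\delta)\geq \delta-\delta^2/2$. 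For (\ref{eq:ch3}), the goal is $\psi(\delta)\geq \delta/3$ on $[1,\infty)$; the function $\psi(\delta)-\delta/3$ is $2\ln 2-4/3\geq 0$ at $\delta=1$ and its derivative $\ln(1+\delta)-1/3$ is nonnegative for $\delta\geq 1$, so monotonicity finishes the claim. For the lower tail (\ref{eq:ch5}), I would repeat the MGF derivation with $t=\ln(1-\delta)<0$, obtaining $\Pr{X\leq (1-\delta)\mu}\leq \bigl((1-\delta)^{-(1-\delta)}e^{-\delta}\bigr)^\mu$, and then bound the exponent by $-\delta^2/3$ via the standard inequality $(1-\delta)\ln(1-\delta)\geq -\delta+\delta^2/3$ on $(0,1]$, again by differentiating twice.

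The hardest of the five is (\ref{eq:ch4}), and I expect this to be the main obstacle: I need $\psi(\delta)\geq \frac{\delta\ln\delta}{2}$ for every $\delta\geq e$. The natural approach is to define $H(\delta)\triangleq \psi(\delta)-\tfrac{\delta\ln\delta}{2}$, compute $H'(\delta)=\ln(1+\delta)-\tfrac{1}{2}\ln\delta-\tfrac{1}{2}$, and observe that $H'(\delta)=\tfrac{1}{2}\ln\!\bigl(\tfrac{(1+\delta)^2}{e\delta}\bigr)\geq 0$ precisely when $(1+\delta)^2\geq e\delta$, which holds for all $\delta>0$ (the minimum of $(1+\delta)^2/\delta$ is $4$ at $\delta=1$, and $4>e$). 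Thus $H$ is nondecreasing on $[e,\infty)$, so it suffices to check the base case $H(e)\geq 0$, i.e., $(1+e)\ln(1+e)-e\geq e/2$, which is immediate from $(1+e)\ln(1+e)\geq (1+e)\cdot 1 = 1+e \geq 3e/2$. Combining $H(\delta)\geq H(e)\geq 0$ with (\ref{eq:ch1}) yields (\ref{eq:ch4}). All five bounds then follow by exponentiating the corresponding pointwise inequality and multiplying by $\mu$ in the exponent, as required.
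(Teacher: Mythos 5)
Your overall architecture is identical to the paper's: take Eq.~\eqref{eq:ch1} as the master bound (which you derive via the MGF; the paper cites~\cite{mitzenmacher2017probability}) and obtain Eqs.~\eqref{eq:ch2}--\eqref{eq:ch5} as pointwise analytic inequalities on the exponent $\psi(\delta)=(1+\delta)\ln(1+\delta)-\delta$. The paper merely \emph{asserts} $\psi(\delta)\ge \delta/3$ (for $\delta\ge1$) and $\psi(\delta)\ge \delta\ln\delta/2$ and cites~\cite{doerr2018tools}; you actually carry out the calculus, and your monotonicity computation $H'(\delta)=\tfrac12\ln\bigl((1+\delta)^2/(e\delta)\bigr)\ge0$ (via $\min_\delta (1+\delta)^2/\delta=4>e$) is correct and is the right idea.

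Two small but real gaps in your details. (i) In the base case for Eq.~\eqref{eq:ch4}, the final link of your chain is false: $1+e\approx 3.72$ while $\tfrac32 e\approx 4.08$, so $(1+e)\cdot 1 \ge \tfrac32 e$ does not hold, and the crude bound $\ln(1+e)\ge1$ is not sharp enough. The conclusion $H(e)\ge0$ is nevertheless true; you need a tighter lower bound on $\ln(1+e)$, e.g.\ $\ln(1+e)>\tfrac{3e}{2(1+e)}\approx1.097$ (equivalently $(1+e)^{2(1+e)}>e^{3e}$), or simply observe that since you already proved $H'\ge0$ on all of $(0,\infty)$ and $H(0^+)=0$, monotonicity alone gives $H(\delta)\ge0$ for every $\delta>0$ with no base case needed (this is exactly what the paper silently uses when it claims the bound ``for every $\delta>0$''). (ii) For Eq.~\eqref{eq:ch2}, your second route via $\ln(1+\delta)\ge\delta-\delta^2/2$ yields $h'(\delta)\ge\delta(\tfrac13-\tfrac{\delta}{2})$, which is nonnegative only for $\delta\le2/3$, not on all of $(0,1]$; your first route through $h''$ (concavity of $h'$ past $\delta=1/2$ together with $h'(1)=\ln2-2/3>0$) does close the argument, so keep that one.
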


\begin{proof}
The bounds in Eqs.~\ref{eq:ch1}, \ref{eq:ch2} and  \ref{eq:ch5} are proved
in~\cite{mitzenmacher2017probability}.  For $\delta>0$, define the
function $f(\delta)\triangleq (1+\delta)\ln(1+\delta)-\delta$ and
note that Eq.~\ref{eq:ch1} states that
$\Pr{X \geq (1+\delta)\mu}\leq \exp(-\mu\cdot f(\delta))$.  For
$\delta\geq 1$, we have that $f(\delta) \geq \delta/3$, which
proves Eq.~\ref{eq:ch3}.  For every $\delta>0$,
$f(\delta) \geq \delta\ln(\delta)/2$, which proves Eq.~\ref{eq:ch4}. 
The bound in  Eq.~\ref{eq:ch4} and its proof also appear in Chapter $10.1.1$ in~\cite{doerr2018tools}.

\end{proof}

We note that the bounds~\ref{eq:ch1}--\ref{eq:ch4} hold even when the parameter $\mu$ is an
upper bound on $\expectation{}{X}$. Moreover, they also hold when the
random variables $X_1, \ldots, X_n$ are negatively
associated~\cite[Thm. 3.1]{dubhashi2009concentration}. Indeed, in the
proofs of the Claims~\ref{claim:sum1} -- \ref{claim:sum4}, we apply
Eq.~\ref{eq:ch1} -- \ref{eq:ch4} to the random variable $\size{S_i}$, which
is a sum of negatively associated indicator variables.
\end{document}